\newcommand\xrowht[2][0]{\addstackgap[.5\dimexpr#2\relax]{\vphantom{#1}}}
\newcommand{\subsubsubsection}[1]{{\textbf{#1}}}
\newtheorem{theorem}{Theorem}
\newtheorem{lemma}[theorem]{Lemma}
\newenvironment{customdef}[1]
  {\innercustomdef}
  {\endinnercustomdef}
\begin{document}
\title{Efficient Fully-Coherent Quantum Signal Processing Algorithms for Real-Time Dynamics Simulation}
\author{John M. Martyn}\email{jmmartyn@mit.edu}\affiliation{Department of Physics, Center for Theoretical Physics, Massachusetts Institute of Technology, Cambridge, Massachusetts 02139, USA} 
\author{Yuan Liu}\affiliation{Department of Physics, Co-Design Center for Quantum Advantage, Massachusetts Institute of Technology, Cambridge, Massachusetts 02139, USA}
\author{Zachary E. Chin}\affiliation{Department of Electrical Engineering and Computer Science, Massachusetts Institute of Technology, Cambridge, Massachusetts 02139, USA} \affiliation{Department of Chemistry, Massachusetts Institute of Technology, Cambridge, Massachusetts 02139, USA}
\author{Isaac L. Chuang}\affiliation{Department of Physics, Co-Design Center for Quantum Advantage, Massachusetts Institute of Technology, Cambridge, Massachusetts 02139, USA} \affiliation{Department of Electrical Engineering and Computer Science, Massachusetts Institute of Technology, Cambridge, Massachusetts 02139, USA}

\begin{abstract}
    Simulating the unitary dynamics of a quantum system is a fundamental problem of quantum mechanics, in which quantum computers are believed to have significant advantage over their classical counterparts. One prominent such instance is the simulation of electronic dynamics, which plays an essential role in chemical reactions, non-equilibrium dynamics, and material design. These systems are time-\emph{dependent}, which requires that the corresponding simulation algorithm can be successfully concatenated with itself over different time intervals to reproduce the overall coherent quantum dynamics of the system. In this paper, we quantify such simulation algorithms by the property of being \emph{fully-coherent}: the algorithm succeeds with arbitrarily high success probability $1-\delta$, while only requiring a single copy of the initial state. We subsequently develop fully-coherent simulation algorithms based on quantum signal processing (QSP), including a novel algorithm that circumvents the use of amplitude amplification while also achieving a query complexity \emph{additive} in time $t$, $\ln(1/\delta)$, and $\ln(1/\epsilon)$ for error tolerance $\epsilon$: $\Theta\big( \|\mathcal{H}\| |t| + \ln(1/\epsilon) + \ln(1/\delta)\big)$. Furthermore, we numerically analyze these algorithms by applying them to the simulation of the spin dynamics of the Heisenberg model and the correlated electronic dynamics of an H$_2$ molecule. Since any electronic Hamiltonian can be mapped to a spin Hamiltonian, our algorithm can efficiently simulate time-dependent \textit{ab initio} electronic dynamics in the circuit model of quantum computation. Accordingly, it is also our hope that the present work serves a bridge between QSP-based quantum algorithms and chemical dynamics, stimulating a cross-fertilization between these exciting fields.
\end{abstract}

\preprint{MIT-CTP/5334}
\maketitle

\section{Introduction}
\fancypagestyle{plain}{%
	\fancyhead[R]{\fbox{to appear in journal xx}}
	\renewcommand{\headrulewidth}{0pt}
}

Quantum computation owes its inception to the fundamental problem of Hamiltonian simulation \cite{feynman1982simulating}, wherein one aims to simulate the time evolution of a quantum system under a Hamiltonian $\mathcal{H}$ for a time $t$. Such a time evolution underpins nearly all dynamical processes as microscopic phenomena are governed by quantum mechanics. Efficient algorithms for Hamiltonian simulation are thus crucial for theoretical modeling of the physical world, including analyzing reaction mechanisms of complex chemical and biological systems~\cite{reiher2017elucidating, Mcardle_2020, lambert2013quantum, cao2020quantum}, probing non-equilibrium ultrafast dynamics \cite{krausz2009attosecond,maiuri2020ultrafast,young2018roadmap}, understanding phases of strongly correlated condensed matter systems \cite{Klein_2006,hofstetter2018quantum}, and simulating quantum field theories~\cite{Preskill_2018, Jong_2021}.

Classical algorithms for simulating the time evolution of quantum systems have existed long before the inception of quantum computation. One major thrust of development comes from the need to model correlated electronic dynamics to high accuracy, where exciting developments have happened over the past few decades \cite{li2020real}. Besides the exact diagonalization (ED) method which scales exponentially with the system size, algorithms based on mean-field theories \cite{mclachlan1964time,jorgensen1975molecular,runge1984density} as well as their multi-configuration variants \cite{meyer1990multi,zanghellini2004testing,fromager2013multi} are among the most computationally efficient methods, yet not applicable to strongly-correlated electronic dynamics. More accurate classical methods for electronic dynamics include the time-dependent variants of high-level correlated electronic structure theories \cite{greenman2010implementation,rohringer2006configuration,schriber2019time,huber2011explicitly,white2018time,white2019time,shushkov2019real,cazalilla2002time,white2004real,daley2004time}. Many of these methods involve a truncation of the Hilbert space and have high-order polynomial scaling with the system size. In addition, real-time stochastic methods based on quantum Monte Carlo sampling \cite{foulkes2001quantum,makri1987monte,doll1987toward,schiro2010real,cohen2015taming,motta2018ab,church2021real} achieve a balance between computational cost and accuracy, yet their applicability is limited to short-time dynamics or small systems due to the notorious real-time phase problem on the sampling trajectories. In general, the phase problem cannot be resolved without truncating the sampling space and adding bias to the simulation results. A combination of the computationally efficient mean-field theories and more accurate high-level theories can be achieved in the embedding framework \cite{freericks2006nonequilibrium,kretchmer2018real} by treating the most important parts of the system using correlated methods and the rest by mean-field theories. 

These advancements in classical algorithms development for correlated electronic dynamics lean on the physical nature of classical computation. Namely, it is \emph{necessary} to use an exponential amount of classical resources to represent the full dynamics of the entire Hilbert space of an interacting many-electron system, unless approximations are used. \emph{Quantum algorithms}, on the other hand, hold the promise to overcome the difficulties encountered by classical algorithms by employing coherent superpositions and entanglement in quantum hardware~\cite{cao2019quantum,kivlichan2017bounding,kassal2008polynomial}. While previous efforts leveraging quantum resources mostly focus on electronic structure problems \cite{lee2022there,cao2019quantum}, in the present work we instead are primarily interested in quantum dynamics, or equivalently Hamiltonian simulation.

Among a variety of efforts to subdue Hamiltonian simulation, prominent quantum algorithms include Trotterization and product formulas~\cite{Lloyd1073, suzuki1991general, Berry_2006,childs2021theory,su2021nearly,tran2021faster}, Taylor series truncation~\cite{2015Berry,novo2017improved,zhao2021exploiting}, and quantum walks~\cite{Childs_2009, Berry_2015, berry2012black, zhang2021parallel}. In recent years however, Hamiltonian simulation algorithms have flourished with the advent of an algorithmic primitive known as {\em quantum signal processing} (QSP)~\cite{Low_2016,Low_2019}, with QSP-based simulation algorithms touting a nearly optimal query complexity with respect to the simulation time $t$ and error $\epsilon$. From a high level, QSP provides a systematic method to apply a nearly arbitrary polynomial transformation to a quantum subsystem. Exploiting such flexibility, QSP-based simulation develops a polynomial transformation of the Hamiltonian that approximates the time evolution operator. 

From an algorithmic viewpoint, a common theme among these Hamiltonian simulation algorithms, and quantum algorithms in general, is cascading together elementary subroutines to construct more powerful algorithms. Whereas from an application point of view, the underlying electronic Hamiltonians that governs the dynamics are often time-dependent due to nuclei motion or interaction with external fields \cite{weinberg2012proton,ramasesha2016real, calegari2014ultrafast,rotter2015review}. One common technique to deal with a time-dependent Hamiltonian is to Trotterize \cite{hen2021quantum,low2018hamiltonian,childs2021theory} it into many time-independent pieces and then individually simulate each piece before coherently cascading them together. However, this cascading procedure is only viable if each time evolution subroutine succeeds with probability near unity, which indicates the importance of maintaining \emph{coherence}, that is, retaining the correct wave function with arbitrarily high probability. 

This is precisely the problem we study in this paper. We focus on \emph{fully-coherent Hamiltonian simulation}, the goal of which is to accurately time evolve a state for a time $t$ to within error $\epsilon$ with arbitrarily high success probability $1-\delta$, provided only a single copy of the initial state. We also center our attention on \emph{efficient} simulation algorithms, whose query complexities are polynomials in $t$, $\ln(1/\epsilon)$, and $\ln(1/\delta)$. Rigorous definitions for these terms are provided later in the text.

\begin{table*}[htbp]
    \centering
    \begin{tabular}{c|c}
        \hline \hline \xrowht[()]{11pt}
        Simulation Algorithm & Query Complexity \  \\
        \hline \hline \xrowht[()]{16pt}
         QSP-LCU + Conventional AA (Sec.~\ref{Sec:ConventionalHamSim}) & $\Theta\Big( \ln(\frac{1}{\delta}) \left(\alpha |t| + \frac{\ln(1/\epsilon)}{\ln(e+\ln(1/\epsilon)/\alpha |t|)} \right) \Big)$ \\ 
         \hline \xrowht[()]{14pt}
         QSP-LCU + Robust Oblivious AA (Sec.~\ref{Sec:ConventionalHamSim}) & $\Theta\Big( \alpha |t| + \frac{\ln(1/\epsilon)}{\ln(e+\ln(1/\epsilon)/\alpha |t|)} \Big)$ \\ 
         \hline \xrowht[()]{14pt}
        Coherent One-Shot Simulation (Sec.~\ref{Sec:OneShotHamSim})& $\Theta \Big(\alpha |t| + \ln(\frac{1}{\epsilon}) + \ln(\frac{1}{\delta})\Big)$  \\
        \hline \hline
    \end{tabular}
    \caption{Query complexities of the three efficient fully-coherent Hamiltonian simulation algorithms discussed in this paper. These include conventional QSP (denoted ``QSP-LCU" in reference to the linear combination of unitaries (LCU) circuit used in its construction) augmented with amplification (AA); in particular, QSP-LCU + Conventional AA and QSP-LCU + Robust Oblivious AA, both described in Sec.~\ref{Sec:ConventionalHamSim}. We also include our Coherent One-Shot Simulation algorithm of Sec.~\ref{Sec:OneShotHamSim}. In these expressions, $t$ is the simulation time, $\alpha$ is an upper bound on the norm $\|\mathcal{H} \|$, $\epsilon$ is the error in the approximation of $e^{-i \mathcal{H}t}$, and $\delta$ is the probability of failure.}
    \label{tab:query-complex0}
\end{table*}

More concretely, we analyze three efficient fully-coherent Hamiltonian simulation algorithms, all of which are rooted in QSP. The first algorithm augments QSP-based simulation with conventional amplitude amplification to boost its success probability, which is the suggested remedy for the issue of post-selection in QSP algorithms~\cite{Gily_n_2019}. This appends a multiplicative factor of $\ln(1/\delta)$ to the query complexity. The second algorithm integrates QSP-based simulation with the robust oblivious amplitude amplification protocol of Ref.~\cite{Berry_2015}, which contributes only an additive factor of $\ln(1/\delta)$.

In contrast, the third algorithm we present introduces a novel QSP technique: it first compresses the spectrum of the Hamiltonian with an affine transformation, and subsequently applies QSP to it with a polynomial that approximates the time evolution operator only over the range of the compressed spectrum. By incorporating such a \textit{pre-transformation} before applying QSP, this algorithm circumvents the need for amplitude amplification and also attains a query complexity additive in $\ln(1/\delta)$ instead of multiplicative: $\Theta\big(\| \mathcal{H}\| |t| + \ln(1/\epsilon) + \ln(1/\delta) \big)$. We further demonstrate that the probability of failure is dictated by the error of the complex exponential approximation, such that $\delta = \Theta(\epsilon)$, which may be easily tuned by the choice of QSP polynomial. We dub this algorithm ``Coherent One-Shot Hamiltonian Simulation". In Table~\ref{tab:query-complex0}, we summarize the query complexities of the fully-coherent simulation algorithms discussed in this work.

For completeness, we note that these algorithms can be seen as alternatives to the single-ancilla QSP algorithm of Ref.~\cite{Low_2019}, which is a modified version of QSP that can also achieve fully-coherent simulation. However, whereas single-ancilla QSP is predicated on a set of constraints, which as we discuss later can inhibit its robustness to noise, here we use new techniques such as the pre-transformation to achieve fully-coherent simulation. In addition, our algorithms are constructed through recent generalizations of QSP known as the quantum eigenvalue transformation and the quantum singular value transformation, which extend the framework of qubitization used in single-ancilla QSP and enable a more intuitive algorithmic development. We hence focus on the aforementioned three algorithms, and briefly touch base with single-ancilla QSP later.

The rest of the paper is organized as follows. In Sec.~\ref{Sec:ConventionalHamSim}, we summarize conventional QSP-based Hamiltonian simulation as initially laid out in Ref.~\cite{Gily_n_2019}, and then integrate this algorithm with amplitude amplification (both conventional, and robust oblivious) to make it fully-coherent. In Sec.~\ref{Sec:OneShotHamSim}, we develop coherent one-shot Hamiltonian simulation, and then compare numerically the complexities of these algorithms as a function of simulation error and simulation time in Sec.~\ref{Sec:Scaling}. We thereafter use these algorithms in Sec.~\ref{Sec:Applications} to simulate the Heisenberg model with both time-independent and time-dependent external onsite fields, as well as electronic migration dynamics in a hydrogen molecule. Finally, Sec.~\ref{Sec:Discussion} concludes this work and discusses future research directions. In aggregate, we hope that this paper serves as an introduction to QSP-based quantum algorithms for the electronic dynamics community and spurs a cross-pollination between the two fields.

\section{Conventional QSP-Based Hamiltonian Simulation} \label{Sec:ConventionalHamSim}

In order to establish the foundations of this work, we first review the conventional QSP-based Hamiltonian simulation algorithm based on a linear combination of untiaries (LCU) circuit. Paralleling the original presentation in Ref.~\cite{Gily_n_2019}, we describe the procedure of this algorithm in Sec.~\ref{sec:qsp-lcu-procedure}, followed by a query complexity analysis in Sec.~\ref{Sec:Conventional_Complexity}. For notational simplicity, we shall refer this approach as the ``QSP-LCU" method. In Sec.~\ref{sec:full-coherent-aa}, we discuss two methods for constructing fully coherent versions of the QSP-LCU method via amplitude amplification, the first using conventional amplitude amplification and the second employing the robust oblivious amplitude amplification protocol of Ref.~\cite{2015Berry}.

This section, as well as the rest of this paper, will assume familiarity with the basics of QSP, the quantum eigenvalue transformation (QET), and the quantum singular value transformation (QSVT). For a brief review of these fundamental concepts, see Appendix~\ref{Sec:qsp}.

\subsection{Procedure} \label{sec:qsp-lcu-procedure}
    In the setup of this problem, we assume access to the Hamiltonian $\mathcal{H}$, of which we desire a unitary block encoding such that we may solve this problem with QSP techniques, in particular the quantum eigenvalue transformation (QET). However, such a unitary block encoding is only realizable if $\|\mathcal{H}\| \leq 1$. In general then, we instead determine an $\alpha \geq \| \mathcal{H}\|$ and construct a unitary block encoding of $\mathcal{H}/\alpha$. This requires some prior knowledge about $\mathcal{H}$, but fortunately such a block encoding can be achieved for a large class of Hamiltonians~\cite{Low_2019, Gily_n_2019}. For simplicity, we will also assume that $\mathcal{H}/\alpha$ is encoded in the $|0\rangle \langle 0|$ matrix element of the unitary, but the following results may be easily adapted to other encodings as well. 
    
    With this rescaled block encoding, one can equivalently imagine that our goal is to simulate the time evolution of a system under the rescaled Hamiltonian $\mathcal{H}/\alpha$ for an effective time $\tau = t \alpha$. This equivalence holds because the corresponding time evolution operators are identical: $e^{-i(\mathcal{H}/\alpha) (\alpha t)} = e^{-i\mathcal{H}t}$.
    
    Hamiltonian simulation may then be straightforwardly achieved with QET. Naively, one may try to employ QET with a polynomial approximation to the function $e^{-ix\tau}$ (here, we view $\tau$ as a parameter representing effective time, not a variable), but because the complex exponential does not have definite parity, this function does not satisfy the constraints on $P(x)$ discussed in Appendix~\ref{Sec:qsp}, thus rendering this approach infeasible. Following~\cite{Gily_n_2019, martyn2021grand}, one can circumvent this issue by instead applying QET twice -- once with an even polynomial approximation to $\cos(x\tau)$, and once with an odd polynomial approximation to $-i \sin(x\tau)$, both of which have definite parities. In addition, although these functions do not obey $|P(\pm 1)|=1$, they may still be implemented by looking at the $|+\rangle\langle+|$ component of the QET sequence, as discussed in Appendix~\ref{Sec:qsp_Overview}. Then, using the linear combination of unitaries (LCU) circuit illustrated in Fig.~\ref{fig:ComplexExponentialCircuit}, one can sum together the results of these two QET executions to obtain $\cos(\mathcal{H}t) - i\sin(\mathcal{H}t) = e^{-i\mathcal{H}t}$, as desired. We refer to this algorithm as conventional QSP-LCU Hamiltonian simulation.

    \begin{figure*}[htbp]
        \begin{center}
        \includegraphics[width=11.5cm]{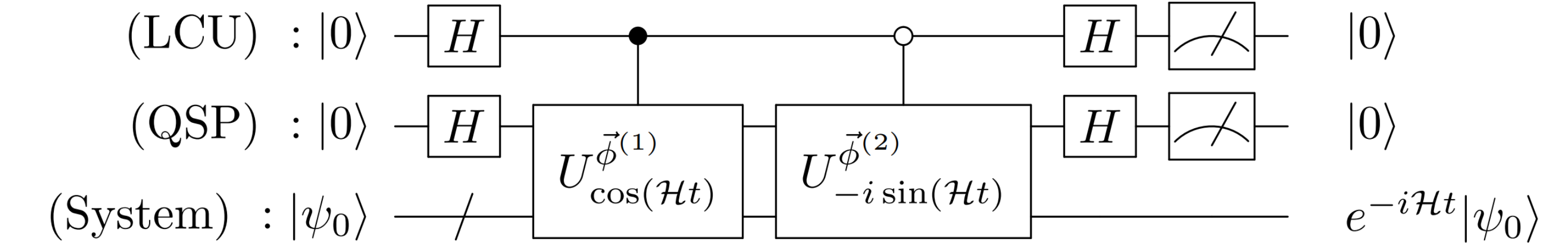}
        \end{center}
    \caption{A quantum circuit, known as a linear combination of unitaries (LCU) circuit, that can apply the time evolution operator $\cos(\mathcal{H}t) - i\sin(\mathcal{H}t) = e^{-i\mathcal{H}t}$ to $|\psi_0\rangle$ in conventional QSP-based Hamiltonian simulation. Here, $U^{\vec{\phi}^{(1)}}_{\cos(\mathcal{H}t)}$ and $U^{\vec{\phi}^{(2)}}_{-i\sin(\mathcal{H}t)}$ are unitaries that block encode $\cos(\mathcal{H}t)$ and $-i\sin(\mathcal{H}t)$, respectively, in their $|+\rangle \langle +|$ matrix elements (hence the Hadamards applied to the QSP qubit, which we have denoted by (QSP)), and may be constructed as QET sequences using phases $\vec{\phi^{(1)}}$ and $\vec{\phi^{(2)}}$, respectively, and a QSP qubit. The correct evolution of the input state is achieved only upon post-selection of both the LCU qubit (denoted (LCU)) and the QSP qubit in the $| 0\rangle$ state. This occurs with a probability close to $\frac{1}{4}$, hence requiring amplitude amplification or repetition to increase the success probability.}
    \label{fig:ComplexExponentialCircuit}
    \end{figure*}

    How likely is this method to succeed and produce the correct time evolved state? As indicated in Fig.~\ref{fig:ComplexExponentialCircuit}, this procedure only succeeds when the LCU qubit (used to achieve the linear combination of unitaries) and the QSP qubit (used to achieve the block encoding) are both measured in the state $|0\rangle$, so we are interested in the probability of accessing the $|00\rangle \langle 00|$ matrix element. Supposing that our choice of QET polynomial allows us to construct an $\epsilon$-approximation to $e^{-i\mathcal{H}t}$, we find that the $|00\rangle \langle 00|$ matrix element of the entire unitary transformation illustrated in Fig.~\ref{fig:ComplexExponentialCircuit} is an $\epsilon/2$-approximation to $\frac{1}{2} e^{-i\mathcal{H}t}$, which we denote by $\frac{1}{2} (e^{-i\mathcal{H}t} \pm \epsilon)$. Accordingly, the probability of success is $p = \big\| \frac{1}{2} (e^{-i\mathcal{H}t} \pm \epsilon) \big\|^2$, i.e. $ \frac{(1-\epsilon)^2}{4} \leq p \leq \frac{(1+\epsilon)^2}{4} $, which is close to $\frac{1}{4}$ and must be corrected to achieve arbitrarily high success probability.

    \subsection{Query Complexity}\label{Sec:Conventional_Complexity}
    Let us now look at the problem of constructing an appropriate QSP polynomial, from which the query complexity may be extracted as $\mathcal{O}(d)$ where $d$ is the degree of the polynomial. We note that in Ref.~\cite{Gily_n_2019}, Gily\'en \textit{et al.} approximate the functions $\cos(x\tau)$ and $\sin(x\tau)$ by polynomials using the Jacobi-Anger expansion:
    \begin{align} \label{eq:Jacobi-Anger1}
        \cos(x\tau) &= J_0(\tau) + 2\sum_{k=1}^\infty (-1)^k J_{2k}(\tau) T_{2k}(x) \\
        \label{eq:Jacobi-Anger2}
        \sin(x\tau) &= 2\sum_{k=0}^\infty (-1)^k J_{2k+1}(\tau) T_{2k+1}(x),
    \end{align}
    where $J_i(x)$ and $T_i(x)$ are the Bessel function and Chebyshev polynomial of order $i$, respectively. One can attain $\epsilon$-approximations to $\cos(x\tau)$ and $\sin(x\tau)$ by truncating these infinite series at a sufficiently large index $K$. The necessary truncation index $K$ may be determined by a function $r(\tau,\epsilon)$, which is defined implicitly as
        \begin{equation}
            \epsilon = \left( \frac{|\tau|}{r(\tau, \epsilon)} \right)^{r(\tau, \epsilon)} \text{ s.t. }  r(\tau, \epsilon) \in (|\tau|, \infty).
        \end{equation}
    $r(\tau, \epsilon)$ may be solved for as $r(\tau,\epsilon) = |\tau| e^{W(\ln(1/\epsilon)/|\tau|)}$, where $W(x)$ is the Lambert-$W$ function, and is proven to scale as
        \begin{equation}
            r(\tau, \epsilon) = \Theta\left(|\tau| + \frac{\ln(1/\epsilon)}{\ln\big(e+\ln(1/\epsilon)/|\tau|\big)} \right).
        \end{equation}
    Returning to the series, it is proven that truncating Eqs.~(\ref{eq:Jacobi-Anger1}) and~(\ref{eq:Jacobi-Anger2}) at $K(\tau,\epsilon) := \big\lfloor \frac{1}{2} r\left(\frac{e}{2} |\tau|, \frac{5}{4} \epsilon\right) \big\rfloor$ yields $\epsilon$-approximations to $\cos(x\tau)$ and $\sin(x\tau)$, respectively, where $0 < \epsilon < 1/e$~\cite{Gily_n_2019}. Because $T_i(x)$ is a polynomial of degree $i$ with definite parity, these approximations are polynomials of degree $2K$ and $2K+1$, respectively, with the correct even and odd parity. Let us denote these polynomials by $P^{\text{cos}}_{\epsilon}(x;\tau)$ and $P^{\text{sin}}_{\epsilon}(x;\tau)$.

    Moreover, because cosine and sine are bounded in magnitude by $1$, these $\epsilon$-approximations only obey $|P^{\text{cos}}_{\epsilon}(x;\tau)|, |P^{\text{sin}}_{\epsilon}(x;\tau)| \leq 1+\epsilon$. However, a QSP polynomial must be bounded in magnitude by $1$, which we may force by rescaling these polynomials by a factor of $\frac{1}{1+\epsilon}$, at the expense of increasing the error of these approximations to $2 \epsilon$. This can be seen with the triangle inequality as 
    \begin{equation}
    \begin{aligned}
        &\left|\tfrac{1}{1+\epsilon} P^{\text{cos}}_{\epsilon}(x;\tau) - \cos(x\tau) \right| \\
        &\quad \leq \tfrac{1}{1+\epsilon} \big( \left|P^{\text{cos}}_{\epsilon}(x;\tau) - \cos(x\tau)\right| + \left|\epsilon \cos(x\tau) \right| \big) \\
        &\quad \leq \tfrac{1}{1+\epsilon}(\epsilon + \epsilon) \leq 2\epsilon, 
    \end{aligned}
    \end{equation} 
    and similarly for $P^{\text{sin}}_{\epsilon}(x;\tau)$. 
    
    As we desire an $\epsilon$-approximation to the complex exponential, we should use truncations of the Jacobi-Anger expansion that are $\epsilon/4$-approximate such that, when rescaled by $\frac{1}{1+\epsilon/4}$, they are $\epsilon/2$-approximations to $\cos(x\tau)$ and $\sin(x\tau)$. With this choice, the sum of these approximations, which is the approximation to $e^{-ix\tau}$, is $\epsilon$-approximate by the triangle inequality. Therefore, recalling that our effective goal is to simulate the rescaled Hamiltonian $\mathcal{H}/\alpha$ for an effective time $\tau = \alpha t$, the polynomials of interest are $\frac{1}{1+\epsilon/4} P^{\text{cos}}_{\epsilon/4}(x;\alpha t)$ and $\frac{1}{1+\epsilon/4} P^{\text{sin}}_{\epsilon/4}(x;\alpha t) $. 
    
     Incorporating these conditions, we see that conventional QSP-LCU Hamiltonian simulation queries $\mathcal{H}$ a total number of times 
    \begin{equation}
        \begin{split}
            &2K\left(\alpha t , \frac{\epsilon}{4}\right) + 2K\left(\alpha t, \frac{\epsilon}{4}\right)+1 \\
            &\quad = 4 \cdot \Bigg\lfloor \frac{1}{2} r\left(\frac{e}{2} \alpha |t|, \frac{5}{4} \frac{\epsilon}{4} \right) \Bigg\rfloor +1 =: N_{\mathcal{H}}^{\text{LCU}}(\epsilon; t, \alpha),  \\ 
        \end{split}
        \label{nh0}
    \end{equation}
    where we have defined $N_{\mathcal{H}}^{\text{LCU}}(\epsilon; t, \alpha)$ to denote the sufficient number of queries to $\mathcal{H}$. Evidently, $N_{\mathcal{H}}^{\text{LCU}}(\epsilon; t, \alpha)$ scales asymptotically as 
    \begin{equation}
        N_{\mathcal{H}}^{\text{LCU}}(\epsilon; t, \alpha) = \Theta\left(\alpha |t| + \frac{\ln(1/\epsilon)}{\ln(e+\ln(1/\epsilon)/(\alpha |t|))} \right).
    \end{equation} 
    In comparing this query complexity with results quoted in the literature, $\alpha$ may be replaced with $\| \mathcal{H}\|$.

    \subsection{The Quest for Fully-Coherent Simulation}
    \label{sec:full-coherent-aa}
    As we emphasized at lengths in the introduction, we desire our simulation algorithm to be efficient and fully coherent. Precisely, we define \emph{efficient} simulation algorithms as those whose query complexities are polynomial in $t$, $\ln(1/\epsilon)$, and $\ln(1/\delta)$:
    \begin{customdef}{1}[Efficient Hamiltonian Simulation]\label{def:efficient}
    An Efficient Hamiltonian Simulation algorithm is an algorithm that queries the Hamiltonian a total number of times $\mathcal{O}\Big(\text{poly}\big( \|\mathcal{H} \| |t|, \ \ln(1/\epsilon), \ \ln(1/\delta)\big) \Big)$.
    \end{customdef}
    Furthermore, fully-coherent algorithms perform accurate simulation while requiring only a single copy of the initial state:
    \begin{customdef}{2}[Fully-Coherent Hamiltonian Simulation]\label{def:coherence}
    A Fully-Coherent Hamiltonian Simulation algorithm is an algorithm that, provided a single copy of an initial state $|\psi_0\rangle$, prepares a time evolved state $|\psi\rangle$ such that $\big\| |\psi\rangle - e^{-i\mathcal{H}t} |\psi_0\rangle \big\| \leq \epsilon$ for a generic time-independent Hamiltonian $\mathcal{H}$, with success probability at least $1-\delta$, and for arbitrarily small $\epsilon$ and $\delta$. 
    \end{customdef} 
    This definition of fully-coherent simulation disallows the use of repetition to boost the success probability, which necessarily requires multiple copies of the initial state, and moreover is unsuitable for concatenation into larger algorithms. In addition, this definition applies to generic time-independent Hamiltonians, as we have placed no restrictions on its properties (sparsity, locality, etc.). The algorithms we present will of course assume access to the Hamiltonian $\mathcal{H}$ in some format (i.e., through a block encoding), but it is otherwise left unrestricted. Lastly, we note that in the context of Hamiltonian simulation, we will often use ``fully-coherent" and ``coherent" interchangeably.
    
    Let us now aim to make QSP-LCU Hamiltonian Simulation fully-coherent as per Definition~\ref{def:coherence} by augmentation with amplitude amplification. While such an amplitude amplification procedure is implicitly performed in QSP-based algorithms~\cite{Gily_n_2019}, its formal analysis is often omitted, thus disregarding the coherence of the algorithm. We first discuss how full-coherence may be achieved with conventional QSP-based amplitude amplification at the expense of a multiplicative factor of $\Theta(\ln(1/\delta))$ appended to the query complexity, and subsequently how robust oblivious amplitude amplification may alternatively be employed to contribute only an additive term of $\Theta(\ln(1/\delta))$.

    \subsection{Achieving Full Coherence with Conventional Amplitude Amplification}\label{sec:conv_ampamp}
    In this section, we discuss how to transform the incoherent simulation algorithm of the previous section into a fully-coherent algorithm by integration with amplitude amplification. We first describe the necessary amplitude amplification procedure in Sec.~\ref{sec:ampamp-procedure}, and then present a query complexity analysis in Sec.~\ref{sec:ampamp-complexity}. We will refer to the algorithm presented in this section as the ``QSP-LCU + Amplitude Amplification" method, or ``QSP-LCU+AA" for short.
    
    \subsubsubsection{Procedure}\label{sec:ampamp-procedure}
    Recall that the output of conventional QSP-LCU Hamiltonian simulation is an $\epsilon/2$-approximation to $\frac{1}{2}e^{-i\mathcal{H}t}$, which we will denote by $A$. $A$ has amplitude $\frac{1-\epsilon}{2} \leq \|A\| \leq \frac{1+\epsilon}{2}$, which we would like to increase to a value at least $\sqrt{1-\delta}$, such that the probability of failure (i.e. accessing the wrong block) is $\leq \delta$. 
    
    We will aim to achieve this amplification with a QSP-based procedure, noting that the QSP-LCU circuit of Fig.~\ref{fig:ComplexExponentialCircuit} is a unitary block encoding of $A$ with projector $\Pi = |00\rangle \langle 00|$. Our desired amplification is nontrivial to achieve using QET because the eigenvalues of $A$ are necessarily complex, making it difficult to find a polynomial that will amplify just their magnitudes; for instance, the uniform spectral amplification polynomial of Ref. \cite{Low_2017_spectral} applies only to real-valued inputs and will not suffice here. Such amplitude amplification necessitates the use of QSVT, a statement that we make rigorous with the following lemma:
    \begin{lemma}[Eigenvalue Magnitude Transformation]\label{lemma:magnitude_tr}
    Given a unitary block-encoding of $A= \sum_k \lambda_k |\lambda_k\rangle \langle \lambda_k|$, where the eigenvalues $\lambda_k = r_k e^{i\theta_k}$ may be complex, one can polynomially transform the magnitudes $r_k$ while retaining the phases $e^{i\theta_k}$, by applying QSVT to the encoding with an \textit{odd degree} polynomial $P(x)$. The output of this protocol is an encoding of a matrix $\tilde{A}= \sum_k P(r_k) e^{i\theta_k} |\lambda_k\rangle \langle \lambda_k|$, whose eigenvalues have the same phases as those of $A$, but magnitudes transformed by $P(x)$. 
    \end{lemma}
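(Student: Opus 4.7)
The plan is to reduce the lemma to the standard QSVT theorem (reviewed in Appendix~\ref{Sec:qsp}), which states that applying QSVT with an odd polynomial $P$ to a block encoding of a matrix $A$ produces a block encoding of $P^{SV}(A) := \sum_k P(\sigma_k)\,|u_k\rangle\langle v_k|$, where $A = \sum_k \sigma_k |u_k\rangle\langle v_k|$ is the singular value decomposition of $A$. The content of the lemma is then essentially a calculation: compute the SVD of $A$ directly from its spectral decomposition, and plug into this formula.

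First, I would note that the hypothesis implicitly forces $A$ to be normal: the spectral decomposition $A = \sum_k \lambda_k |\lambda_k\rangle\langle\lambda_k|$ with an orthonormal eigenbasis $\{|\lambda_k\rangle\}$ gives $[A,A^\dagger]=0$. For such a normal operator, an SVD can be read off by pulling the phases out of each eigenvalue: with $\lambda_k = r_k e^{i\theta_k}$ and $r_k \geq 0$,
\[
A \;=\; \sum_k r_k \,\bigl(e^{i\theta_k}|\lambda_k\rangle\bigr)\langle\lambda_k|.
\]
Since $\{|\lambda_k\rangle\}$ is orthonormal and each $e^{i\theta_k}$ is unimodular, the kets $\{e^{i\theta_k}|\lambda_k\rangle\}$ are also orthonormal. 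Hence this is a bona fide SVD with singular values $\sigma_k = r_k$, right singular vectors $|v_k\rangle = |\lambda_k\rangle$, and left singular vectors $|u_k\rangle = e^{i\theta_k}|\lambda_k\rangle$.

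Second, I would invoke the QSVT theorem for this SVD with the prescribed odd polynomial $P$, which immediately produces a block encoding of
\[
P^{SV}(A) \;=\; \sum_k P(r_k)\, e^{i\theta_k}\, |\lambda_k\rangle\langle\lambda_k| \;=:\; \tilde{A},
\]
precisely the claimed transformation. A short check on the eigenstructure, $\tilde{A}|\lambda_j\rangle = P(r_j)\,e^{i\theta_j}|\lambda_j\rangle$, confirms that the eigenvectors of $A$ are preserved while each eigenvalue has its magnitude mapped $r_k \mapsto P(r_k)$ and its phase $e^{i\theta_k}$ retained.

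The only real subtlety is the phase ambiguity at $r_k = 0$, where $\theta_k$ is undefined; but because $P$ is odd we have $P(0)=0$, so any such mode contributes nothing to $\tilde{A}$ regardless of the convention chosen for $\theta_k$, and the statement is unaffected. The conceptual crux is thus recognizing that a normal operator admits an SVD that shares its eigenvectors up to per-mode phase factors, making the odd-polynomial QSVT act on precisely the magnitudes $r_k$; once that identification is made, the lemma follows by a direct appeal to the standard QSVT theorem.
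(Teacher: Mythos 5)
Your proof is correct and follows essentially the same route as the paper: read off the SVD of the normal operator $A$ from its spectral decomposition (singular values $r_k$, phases absorbed into the left singular vectors), then apply the odd-polynomial QSVT formula $\sum_k P(\sigma_k)|w_k\rangle\langle v_k|$. Your added remarks on normality and the $r_k=0$ phase ambiguity are sensible refinements the paper omits, but the core argument is identical.
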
 
    \begin{proof}
    First observe that $A$ has singular value decomposition $A = \sum_k \sigma_k |w_k\rangle \langle v_k|$, where $\sigma_k = r_k$ and $|w_k\rangle \langle v_k| = e^{i\theta_k} |\lambda_k\rangle \langle \lambda_k| $. This expression makes it evident that the magnitudes of the eigenvalues are encoded in the singular values $\sigma_k$, while the phases are stored in the singular vector product $|w_k\rangle \langle v_k|$.
    
    Therefore, to transform the magnitudes, one may apply to the encoding of $A$ a singular value transformation with the desired polynomial. However, to ensure that the phases $e^{i\theta_k}$ are preserved by this transformation, we must employ QSVT with an odd degree polynomial, such that the output will be $\sum_k P(\sigma_k) |w_k\rangle \langle v_k| = \sum_k P(r_k)e^{i\theta_k} |\lambda_k\rangle \langle \lambda_k| = \tilde{A}$ (see  Appendix~\ref{sec:QSVT_overview}), which transforms the magnitudes but retains the phases. On the other hand, an even degree polynomial would output $\sum_k P(\sigma_k) |v_k\rangle \langle v_k| = \sum_k P(r_k) |\lambda_k\rangle \langle \lambda_k|$, which does not retain the phases.
    \end{proof}
    
    We may apply this lemma to our advantage. First note that if $\mathcal{H}$ has eigenvalue decomposition $\mathcal{H} = \sum_k E_k  |E_k\rangle \langle E_k|$, then $A$ has singular value decomposition $A = \sum_k \sigma_k |w_k\rangle \langle v_k|$ where $\frac{1-\epsilon}{2} \leq \sigma_k \leq \frac{1+\epsilon}{2}$, and $\big\| |w_k\rangle \langle v_k| - e^{-i E_k t}  |E_k\rangle \langle E_k| \big\| \leq \epsilon$ (this last inequality comes from the worst case scenario when $\sigma_k=1/2$ and all of the errors are due to incorrect phases in $|w_k\rangle \langle v_k|$). As we would like to amplify $\|A\|$ to a value at least $\sqrt{1-\delta}$, it will suffice to choose a polynomial $P(x)$ that maps inputs $\geq \frac{1-\epsilon}{2}$ to a value at least $1-\delta/2 \geq \sqrt{1-\delta}$, which will guarantee a probability of failure $\leq \delta$. Drawing inspiration from Refs.~\cite{Gily_n_2019, martyn2021grand}, we may select $P(x)$ to be a polynomial approximation to a sign function. 
    
    We denote the sign function by $\text{sign}(x)$, and it obeys
    \begin{equation}
        \text{sign}(x) = \begin{cases}
            -1 & x<0 \\
            0 & x=0 \\
            1 & x>0. 
        \end{cases}
    \end{equation}
    In Appendix~\ref{Sec:SignFunction}, we illustrate a construction of a polynomial approximation to $\text{sign}(x)$ that is accurate away from the discontinuity at $x=0$. This polynomial, which we denote by $P^\text{sign}_{\epsilon, \Delta}(x)$, obeys 
    \begin{equation}
        \big|P^\text{sign}_{\epsilon, \Delta}(x) - \text{sign}(x) \big| \leq \epsilon \ \text{ for } x\in [-1,1] \backslash \left[-\Delta/2, \ \Delta/2 \right].
    \end{equation}
    That is, for $x$ a distance at least $\Delta/2$ away from the discontinuity at $x=0$, $P^\text{sign}_{\epsilon, \Delta}(x)$ provides an $\epsilon$-approximation to the sign function. We prove in Appendix~\ref{Sec:SignFunction} that $P^\text{sign}_{\epsilon, \Delta}(x)$ can be constructed as a polynomial of odd degree $d=\gamma(\epsilon, \Delta)$, where $\gamma(\epsilon, \Delta)$ is a function explicitly defined in Eq.~(\ref{eq:gamma}) that scales as $\Theta\left(\frac{1}{\Delta} \ln(\frac{1}{\epsilon})\right)$. As this degree is necessarily odd, Lemma~\ref{lemma:magnitude_tr} applies.

    \begin{figure*}[htbp]
        \begin{center}
        \includegraphics[width=18cm]{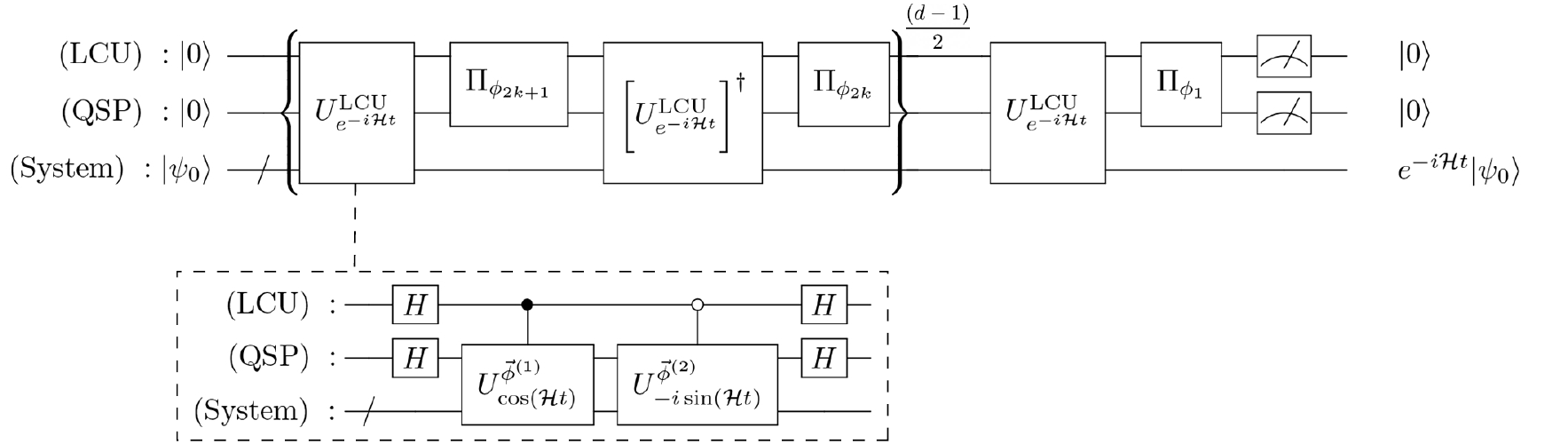}
        \end{center}
        \caption{A quantum circuit depicting the QSP-LCU+AA fully-coherent simulation algorithm. Here, our projectors are $\Pi = |00\rangle\langle 00| = \tilde{\Pi}$, and the corresponding projector-controlled phase shift is $\Pi_{\phi_k}$ where the phase $\phi_k$ is applied to the subspace $\ket{00} \bra{00}$. The gates within brackets are repeated $(d-1)/2$ times for $k = (d-1)/2, (d-1)/2-1, ..., 1$ with phases $\vec{\phi}$ to construct the appropriate QSVT sequence for amplitude amplification. This procedure succeeds if the two ancilla qubits (previously dubbed the LCU and QSP qubits) are measured in state $|00\rangle$, which occurs with high probability $\geq 1-\delta$ after appropriate amplitude amplification. We also note that the unitary $U^{\rm{LCU}}_{e^{-i\mathcal{H}t}}$ is defined by the circuit in the inset, which is identical to the circuit from Fig.~\ref{fig:ComplexExponentialCircuit} used in QSP-LCU simulation.}
    \label{fig:AmplitudeAmpSimulationCircuit}
    \end{figure*}

    \begin{figure*}[htbp]
    \begin{center}
    \includegraphics[width=15cm]{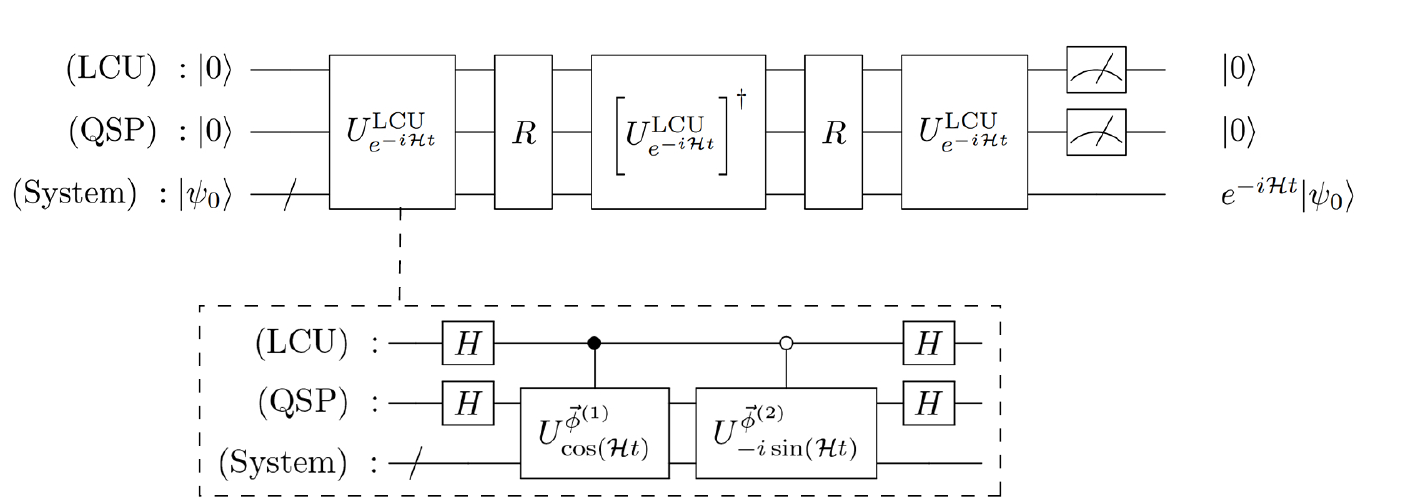}
    \end{center}
    \caption{A quantum circuit depicting the QSP-LCU+ROAA protcol, where $R = I - 2P$ (for projector $P = |00\rangle \langle 00| \otimes I$) is an ancilla operator as in Eq.~\eqref{oaa-pw}. Like the circuit  of Fig.~\ref{fig:AmplitudeAmpSimulationCircuit}, this procedure succeeds if the LCU and QSP qubits  are measured in state $|00\rangle$, which occurs with high probability $\geq 1-\delta$ after robust oblivious amplitude amplification.}
    \label{fig:ObliviousAmplitudeAmpSimulationCircuit}
    \end{figure*}

    Therefore, to perform our sought-after amplitude amplification, it will suffice to choose the QSVT polynomial to be
    \begin{equation}
        P(x) = P^\text{sign}_{\delta/2, (1-\epsilon)}(x),
    \end{equation}
    which will indeed map inputs $x>\frac{1-\epsilon}{2}$ to values $\geq 1-\delta/2 \geq \sqrt{1-\delta}$, as desired. Applying this to $A$, we see that the output will be an $\epsilon$-approximation to $e^{-i\mathcal{H}t}$ that is accessed with probability at least $1-\delta$. We depict in Fig.~\ref{fig:AmplitudeAmpSimulationCircuit} the quantum circuit that implements this fully-coherent simulation via amplitude amplification, highlighting how it applies QSVT to the unitary resulting from conventional QSP-LCU Hamiltonian simulation.

    \subsubsubsection{Query Complexity} \label{sec:ampamp-complexity}
    As this procedure applies QSVT to $A$, which is already constructed as an eigenvalue transformation, we find that the number of queries to $\mathcal{H}$ is a product of the degrees of the two polynomials used in these transformations:
    \begin{equation}\label{eq:NAmp}
        N^{\text{AA}}_{\mathcal{H}}(\epsilon, \delta; t, \alpha) = \gamma(\delta/2, 1-\epsilon) N^{\text{LCU}}_{\mathcal{H}}(\epsilon; t, \alpha).
    \end{equation}
    (the superscript ``AA" standing for amplitude amplification). Using the asymptotic behavior of $\gamma(\delta/2, 1-\epsilon)$ and  $N^{\text{LCU}}_{\mathcal{H}}(\epsilon; t, \alpha)$, we find that $N^{\text{AA}}_{\mathcal{H}}(\epsilon, \delta; t, \alpha)$ scales as 
    \begin{equation}
    \begin{aligned}
        &N^{\text{AA}}_{\mathcal{H}}(\epsilon, \delta; t, \alpha) = \\
        & \qquad \quad \Theta\left( \ln(\frac{1}{\delta}) \left(\alpha |t| + \frac{\ln(1/\epsilon)}{\ln(e+\ln(1/\epsilon)/(\alpha |t|))} \right) \right).
    \end{aligned}
    \end{equation}
    Although efficient by Def.~\ref{def:efficient}, this query complexity has a $\Theta \left( \ln(\frac{1}{\delta}) \right)$ multiplicative prefactor, which can be costly. This result evidences the following theorem: 
    \begin{theorem}[QSP-LCU+AA Hamiltonian Simulation]\label{thm:qsp-lcu-aa}
    Provided a block encoding of the Hamiltonian (possibly rescaled), the QSP-LCU+AA simulation algorithm achieves efficient fully-coherent Hamiltonian simulation with arbitrarily small $\epsilon$ and $\delta$, while querying (a block encoding of) the Hamiltonian a total number of times 
    \begin{equation}
        \Theta\left( \ln(\frac{1}{\delta}) \left(\alpha |t| + \frac{\ln(1/\epsilon)}{\ln(e+\ln(1/\epsilon)/(\alpha |t|))} \right) \right).
    \end{equation}
    \end{theorem}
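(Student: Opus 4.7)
The plan is to stitch together the two analyses already developed earlier in the section: the correctness and query count of the bare QSP-LCU circuit (Sec.~\ref{sec:qsp-lcu-procedure}--\ref{Sec:Conventional_Complexity}), and the QSVT-based amplification of its top-left block (Sec.~\ref{sec:conv_ampamp}). Concretely, I would first invoke the fact that the QSP-LCU circuit of Fig.~\ref{fig:ComplexExponentialCircuit} is a unitary block-encoding, with projector $\Pi=\tilde{\Pi}=\ket{00}\bra{00}$, of an operator $A$ satisfying $\|A - \tfrac{1}{2}e^{-i\mathcal{H}t}\|\leq \epsilon/2$, using $N_{\mathcal{H}}^{\text{LCU}}(\epsilon;t,\alpha)$ queries to (the block-encoding of) $\mathcal{H}/\alpha$. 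Then I would feed this block-encoding into the QSVT construction with polynomial $P^{\text{sign}}_{\delta/2,\,1-\epsilon}$, whose degree is $\gamma(\delta/2,1-\epsilon)$ and whose oddness lets Lemma~\ref{lemma:magnitude_tr} apply.

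The correctness argument proceeds in three pieces. First, write the singular value decomposition $A=\sum_k \sigma_k\ket{w_k}\bra{v_k}$ with $\sigma_k=r_k\in[\tfrac{1-\epsilon}{2},\tfrac{1+\epsilon}{2}]$ and $\ket{w_k}\bra{v_k}=e^{i\theta_k}\ket{\lambda_k}\bra{\lambda_k}$, as in the proof of Lemma~\ref{lemma:magnitude_tr}. Second, note that $P^{\text{sign}}_{\delta/2,1-\epsilon}$ is an odd polynomial bounded by $1$ on $[-1,1]$ that $\delta/2$-approximates $\mathrm{sign}(x)$ for $|x|\geq (1-\epsilon)/2$; thus each $\sigma_k$ is mapped to $P^{\text{sign}}_{\delta/2,1-\epsilon}(\sigma_k)\in[1-\delta/2,\,1]$, which immediately gives $\|\tilde{A}\|\geq 1-\delta/2\geq \sqrt{1-\delta}$, so the $\ket{00}\bra{00}$ block is reached with probability at least $1-\delta$. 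Third, I would perform the error bound via the triangle inequality: since the prior $\epsilon/2$ deviation of $A$ from $\tfrac{1}{2}e^{-i\mathcal{H}t}$ and the post-amplification deviation of each $\sigma_k$ from $1/2$ both contribute additively, and since the phases $e^{-iE_k t}$ are preserved by the lemma, the amplified operator $\tilde{A}$ is within $\epsilon$ (up to constants in the choice of $\epsilon/2,\epsilon/4$) of $e^{-i\mathcal{H}t}$ — verifying Def.~\ref{def:coherence} with a single copy of $\ket{\psi_0}$.

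The query complexity then follows immediately by composing the two QSP/QSVT sequences: every call to the inner $U^{\text{LCU}}_{e^{-i\mathcal{H}t}}$ and its inverse in the amplification circuit uses $N_{\mathcal{H}}^{\text{LCU}}(\epsilon;t,\alpha)$ queries to the Hamiltonian block-encoding, and the outer QSVT invokes this block $\gamma(\delta/2,1-\epsilon)$ times, giving Eq.~(\ref{eq:NAmp}). Substituting the asymptotics $\gamma(\delta/2,1-\epsilon)=\Theta(\ln(1/\delta))$ (since $1-\epsilon=\Theta(1)$) and $N_{\mathcal{H}}^{\text{LCU}}=\Theta\bigl(\alpha|t|+\ln(1/\epsilon)/\ln(e+\ln(1/\epsilon)/(\alpha|t|))\bigr)$ delivers the stated bound. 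Efficiency in the sense of Def.~\ref{def:efficient} is then immediate since the expression is polynomial in $\alpha|t|$, $\ln(1/\epsilon)$, and $\ln(1/\delta)$.

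The main subtlety I expect — and the only place where the argument is not bookkeeping — is the passage from the singular-value-magnitude amplification of Lemma~\ref{lemma:magnitude_tr} to an honest operator-norm bound $\|\tilde{A}-e^{-i\mathcal{H}t}\|\leq \epsilon$. One must be careful that the $\epsilon/2$ approximation error of $A$ is distributed between the singular values $\sigma_k$ and the singular-vector factors $\ket{w_k}\bra{v_k}$; the worst case inflates the deviation of the preserved phases $\theta_k$ from $-E_k t$ by a factor proportional to $1/\sigma_k\approx 2$, so one should budget the intermediate precision (hence the choice of $\epsilon/4$ in the Jacobi--Anger truncation) accordingly. Once this error accounting is done cleanly, everything else is a direct assembly of Lemma~\ref{lemma:magnitude_tr}, the sign-polynomial bound of Appendix~\ref{Sec:SignFunction}, and Eq.~(\ref{nh0}).
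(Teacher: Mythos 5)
Your proposal follows the paper's own argument essentially step for step: block-encode the $\epsilon/2$-approximation $A$ to $\tfrac{1}{2}e^{-i\mathcal{H}t}$ via the QSP-LCU circuit, amplify its singular values with the odd sign-approximating polynomial $P^{\text{sign}}_{\delta/2,1-\epsilon}$ through QSVT (justified by Lemma~\ref{lemma:magnitude_tr}), and multiply the degrees to obtain Eq.~(\ref{eq:NAmp}) and its asymptotics. The error-accounting subtlety you flag (how the $\epsilon/2$ deviation splits between singular values and the phase factors $\ket{w_k}\bra{v_k}$) is precisely the point the paper handles via its worst-case remark at $\sigma_k=1/2$, so nothing is missing.
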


    \subsection{Achieving Full Coherence with Robust Oblivious Amplitude Amplification}\label{sec:RO_ampamp}
    An alternative approach to achieving full coherence is through robust oblivious amplitude amplification, which we achieve in the current section. We will refer to the resulting algorithm as the ``QSP-LCU + Robust Oblivious Amplitude Amplification" method, or ``QSP-LCU+ROAA" for short.
    
    Earlier, we established that the conventional QSP-LCU simulation algorithm only succeeds with probability $\frac{(1-\epsilon)^2}{4} \leq p \leq \frac{(1+\epsilon)^2}{4} $. If the success probability were known exactly (say $p=\frac{1}{4}$ precisely), one could use the amplification procedure of Ref.~\cite{Brasard_2002} to boost the success probability arbitrarily high while appending only an $\mathcal{O}(1)$ factor to the query complexity. However, as we only have upper and lower bounds on the probability of success here, one can alternatively employ the robust oblivious amplitude amplification protocol from Ref.~\cite{2015Berry} to boost the success probability, which also appends an $\mathcal{O}(1)$ factor to the query complexity. The resulting error scales as $\delta = \Theta(\epsilon)$, which renders the query complexity additive in $\ln(1/\delta)$, and thus on par with our coherent one-shot simulation algorithm of Sec.~\ref{Sec:OneShotHamSim}.

    \subsubsubsection{Procedure}
    Robust oblivious amplitude amplification pertains to the following scenario: given a unitary $W$ whose application to a state $|\psi\rangle$ produces a block encoding of $\tilde{U} |\psi \rangle$ with magnitude close to $1/2$ for some matrix $\tilde{U}$ close to a unitary $U_r$, we seek to construct an operator $A$ whose application to $| \psi \rangle$ produces such a block encoding with magnitude close to $1$. More precisely, given $W$ and a projector $P = |0\rangle \langle 0| \otimes I$ such that 
    \begin{equation}
        PW|0\rangle |\psi \rangle = \frac{1}{s} |0\rangle \tilde{U} |\psi \rangle 
        \label{oaa-pw}
    \end{equation}
    where $|s-2|= \mathcal{O}(\delta)$ and $\|\tilde{U} - U_r\| = \mathcal{O}(\delta)$, the robust oblivious amplitude amplification protocol constructs an operator $A$ that satisfies 
    \begin{equation}
        \big\| PA |0\rangle |\psi \rangle - |0\rangle U_r |\psi \rangle \big\| = \mathcal{O}(\delta).
    \end{equation}
    In other words, $A$ applies an approximation to the unitary $U_r$ with error order $\delta$. $A$ is explicitly constructed as $A=-WRW^\dag RW$, where $R=I-2P$ is an ancilla reflection operator.
    
    This construction is directly applicable to the amplitude amplification necessary for QSP-based simulation, the output of which is a QSP sequence that block encodes an $\epsilon$-approximation to $e^{-i\mathcal{H}t}$ with amplitude $\epsilon$-close to $\frac{1}{2}$. Therefore, by invoking this procedure with QSP-LCU circuit as $W$ and $P=|00\rangle \langle 00| \otimes I$, we may produce a block encoding of an $\mathcal{O}(\epsilon)$-approximation to $e^{-i\mathcal{H}t}$. We illustrate in Fig.~\ref{fig:ObliviousAmplitudeAmpSimulationCircuit} the quantum circuit that implements fully-coherent simulation via robust oblivious amplitude amplification.
    
    \subsubsubsection{Query Complexity}
    Mapping the QSP-LCU Hamiltonian simulation to the setting of robust oblivious amplitude amplification, we may take $U_r = e^{-i \mathcal{H} t}$, $s=2$, and $\| \tilde{U} - U_r\| \leq \epsilon $, which corresponds to a block encoding of some operator $\frac{1}{2}(e^{-i \mathcal{H}t} \pm \epsilon)$, as desired. This dictates that 
    \begin{equation}
    \begin{aligned}
        &\| \tilde{U} \tilde{U}^\dag - I \| \leq \|(U_r \pm \epsilon)(U_r^\dag \pm \epsilon) - I \| \leq \\
        & \delta \| U_r + U_r^\dag \| + \epsilon^2 \leq 3\epsilon.
    \end{aligned}
    \end{equation}
    Ref.~\cite{Berry_2015} shows that $PA |0\rangle |\psi\rangle = | 0 \rangle (\frac{3}{s}\tilde{U} - \frac{4}{s^3} \tilde{U} \tilde{U}^\dag \tilde{U}) |\psi \rangle$, which in our case lends itself to the inequality  
    \begin{equation}
    \begin{split}
        &\left\| \tilde{U} \left(\frac{3}{s} - \frac{4}{s^3} \tilde{U}^\dag \tilde{U} \right) - U_r \right\| \leq \\
        &\qquad \left\| \tilde{U} \left(\frac{3}{2} - \frac{1}{2} (I\pm 3\epsilon)\right) - U_r \right \| \leq \\ 
        &\qquad \frac{3}{2}\epsilon + \| \tilde{U} - U_r \| \leq \frac{5}{2}\epsilon.
    \end{split}
    \end{equation}
    Therefore, oblivious robust amplitude amplification applied to QSP-LCU Hamiltonian simulation achieves an error bounded by $\frac{5}{2}\epsilon$. This implies that the correct block of the block encoding has magnitude $\geq 1-\frac{5}{2}\epsilon$, and that the corresponding failure probability of this algorithm is bounded by $\delta \leq 1-(1-\frac{5}{2}\epsilon)^2 \leq 5 \epsilon$.
    
    If we are interested then in the query complexity of achieving fully-coherent simulation with a total error $\epsilon$, we may take $\epsilon \mapsto \frac{2}{5}\epsilon$ in the equations of Sec.~\ref{Sec:Conventional_Complexity}. Observing that the robust oblivious amplitude amplification operator $A$ employs three instances of the QSP-LCU circuit $W$, the total query complexity is 
    \begin{equation}\label{eq:N_OAA}
        N_{\mathcal{H}}^{\text{ROAA}}(\epsilon; t, \alpha) = 3 N_{\mathcal{H}}^{\text{LCU}}(\tfrac{2}{5}\epsilon; t, \alpha)
    \end{equation}
    (``ROAA" for robust oblivious amplitude amplification). As above, the corresponding probability of failure is bounded as $\delta \leq 2\epsilon$. We thus have:
    \begin{theorem}[QSP-LCU+ROAA Hamiltonian Simulation]\label{thm:qsp-lcu-aa}
    Provided a block encoding of the Hamiltonian (possibly rescaled), the QSP-LCU+ROAA simulation algorithm achieves efficient fully-coherent Hamiltonian simulation with arbitrarily small $\epsilon$ and $\delta=2\epsilon$, while querying (a block encoding of) the Hamiltonian a total number of times 
    \begin{equation}
        \Theta\left( \alpha |t| + \frac{\ln(1/\epsilon)}{\ln(e+\ln(1/\epsilon)/(\alpha |t|))} \right).
    \end{equation}
    \end{theorem}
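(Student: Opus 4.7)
The plan is to import the robust oblivious amplitude amplification (ROAA) machinery of Ref.~\cite{Berry_2015} and instantiate it with $W$ taken to be the QSP-LCU circuit analyzed in Sec.~\ref{sec:qsp-lcu-procedure}. The argument breaks naturally into three steps: (i) verify that the ROAA hypotheses hold with this choice of $W$, (ii) propagate the approximation error through the closed-form ROAA output formula, and (iii) convert between the internal tolerance fed into QSP-LCU and the final tolerance $\epsilon$ advertised in the theorem, then count queries.

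For step (i), I would identify the mapping $P = |00\rangle\langle 00|\otimes I$, $s = 2$, $U_r = e^{-i\mathcal{H}t}$, and $\tilde{U}$ equal to the operator that $W$ block-encodes on this subspace. By Sec.~\ref{sec:qsp-lcu-procedure}, $\tilde{U}$ is of the form $e^{-i\mathcal{H}t} \pm \epsilon$, so $\|\tilde{U} - U_r\| \leq \epsilon$ and the encoded amplitude is $\epsilon$-close to $1/2$. The nontrivial check is the near-isometry $\|\tilde{U}\tilde{U}^\dagger - I\| = \mathcal{O}(\epsilon)$; this follows from $U_r U_r^\dagger = I$ together with $\|U_r\|=1$ and the triangle inequality, yielding the explicit bound $3\epsilon$ quoted in the text.

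For step (ii), I would substitute these bounds into the ROAA output identity $PA|0\rangle|\psi\rangle = |0\rangle\bigl(\tfrac{3}{s}\tilde{U} - \tfrac{4}{s^3}\tilde{U}\tilde{U}^\dagger\tilde{U}\bigr)|\psi\rangle$ from Ref.~\cite{Berry_2015}, rewrite the bracket at $s=2$ as $\tilde{U} - \tfrac{1}{2}\tilde{U}(\tilde{U}^\dagger\tilde{U} - I)$, and apply the triangle inequality with the near-isometry bound from step (i) together with $\|\tilde{U} - U_r\| \leq \epsilon$ to obtain operator-norm error $\leq \tfrac{5}{2}\epsilon$. Converting this to a failure probability by squaring the amplitude, $1 - (1 - \tfrac{5}{2}\epsilon)^2 \leq 5\epsilon$, gives the raw guarantee.

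For step (iii), to match the theorem's exact constants, I would run the bulk algorithm with internal tolerance $\tfrac{2}{5}\epsilon$, so that the bounds above translate into output error $\epsilon$ and $\delta \leq 2\epsilon$. The query count is immediate from the structure $A = -WRW^\dagger RW$: exactly three uses of the QSP-LCU circuit, so the total is $3N_\mathcal{H}^{\text{LCU}}(\tfrac{2}{5}\epsilon; t, \alpha)$, and the asymptotic formula from Sec.~\ref{Sec:Conventional_Complexity} absorbs both the $\tfrac{2}{5}$-rescaling and the prefactor $3$ into the claimed $\Theta(\cdot)$. The main obstacle is step (ii): the identity of Ref.~\cite{Berry_2015} is most transparent when $\tilde{U}$ is a true isometry, so one must carefully track how the approximate unitarity of $\tilde{U}$ propagates through the cubic polynomial in $\tilde{U}$, ensuring the leading error stays linear in $\epsilon$ (with a tight constant) rather than degrading to a weaker or quadratic bound that would ruin the sharp $\delta = 2\epsilon$ statement.
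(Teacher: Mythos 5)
Your proposal follows the paper's proof essentially step for step: the same identification of $W$, $P$, $s=2$, $U_r=e^{-i\mathcal{H}t}$, the same near-isometry bound $\|\tilde{U}\tilde{U}^\dagger - I\|\leq 3\epsilon$, the same $\tfrac{5}{2}\epsilon$ error via the ROAA output identity, the same $\delta\leq 1-(1-\tfrac{5}{2}\epsilon)^2$ conversion, and the same $\epsilon\mapsto\tfrac{2}{5}\epsilon$ rescaling with the factor-of-three query count $3N_{\mathcal{H}}^{\text{LCU}}(\tfrac{2}{5}\epsilon;t,\alpha)$. It is correct and matches the paper's argument, including the constants.
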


\section{Coherent One-Shot Hamiltonian Simulation} \label{Sec:OneShotHamSim}

In Sec.~\ref{Sec:ConventionalHamSim}, we presented a simple but non-optimal fully-coherent Hamiltonian simulation algorithm that acquired a multiplicative prefactor of $\ln \left( 1 / \delta \right)$ in its query complexity, and subsequently showed how to remove this factor by using robust oblivious amplitude amplification \cite{2015Berry} to achieve fully-coherent simulation. 

In this section, we introduce our novel ``coherent one-shot Hamiltonian simulation" algorithm that attains a query complexity \emph{additive} in $\ln \left( 1 / \delta \right)$ rather than multiplicative, and is thus also both fully-coherent and efficient. In contrast to the algorithms in Sec.~\ref{Sec:ConventionalHamSim}, our new construction guarantees the unitarity of the simulation algorithm by using a single QSP call to directly approximate the complex exponential function $e^{-i x \tau}$, without the need for any amplitude amplification technique. We will discuss the performance trade-off between our new coherent one-shot simulation algorithm and the two amplitude amplification based fully-coherent algorithms in detail in Sec.~\ref{Sec:Scaling}.

We outline the coherent one-shot algorithm in Sec. \ref{sec:coherent-procedure}, where details for the block-encoding (Sec.~\ref{sec:coherent-be}), the construction of the complex exponential function (Sec. \ref{sec:coherent-eece}), as well as the success probability (Sec. \ref{Sec:Coherent_Success}) are presented. This is followed by a query complexity analysis in Sec. \ref{Sec:Coherent_Runtime} and a summary of the main result in Theorem~\ref{thm:coherent-one-shot}.

\subsection{Procedure} \label{sec:coherent-procedure}
Coherent one-shot Hamiltonian simulation is achieved by taking a modified approach to QSP: we first apply a transformation to the Hamiltonian, realized via unitary gates and ancilla qubits, and subsequently apply QSP to the transformed Hamiltonian. We call this first step a \textit{pre-transformation}: it rescales the eigenvalues of the Hamiltonian, which enables us to circumvent the parity constraint on QSP polynomials and ultimately achieve a broader class of transformations. We also note that as the Hamiltonian is accessed via a block encoding, the output of the pre-transformation is really a block-encoding of the transformed Hamiltonian, possibly with a different projector. For the sake of simplicity, we will employ a linear pre-transformation in the development of coherent one-shot Hamiltonian simulation; however, the pre-transformation may be nonlinear in general.

In more detail, our starting point is a block encoding of $\mathcal{H}/\alpha$ with eigenvalues in the range $[-1,1]$. In the coherent one-shot algorithm, we first find a linear pre-transformation that rescales its eigenvalues into the range $(a,b) \subset [-1,1]$, and then determine a QSP polynomial that is a good approximation to the complex exponential $e^{-ix\tau}$ over the range $x \in (a,b) \subset[-1,1]$. Finally, by applying a quantum eigenvalue transformation (QET) to this encoding with our polynomial approximation to the complex exponential and an appropriately chosen effective time $\tau$, we attain the time evolution operator up to a global phase. Note how the use of the pre-transformation shifts our focus to eigenvalues only in the reduced range $(a,b) \subset [-1,1] $, which allows us to ignore the boundary conditions at $x= \pm 1$ imposed by QSP. 

Below, we specialize to one particular implementation of this method, employing a linear pre-transformation and a polynomial that approximates the complex exponential $e^{-ix\tau}$ over a range of positive $x$: $x \in (a,b) \subset[0,1]$. While this construction is sufficient for demonstrating the advantage of coherent one-shot simulation, it is by no means optimal as alternate constructions, including other pre-transformations, certainly exist. 

In general, the use of a pre-transformation provides an alternative to the single-ancilla QSP algorithm of Ref.~\cite{Low_2019} for the construction of the complex exponential, and even of more general functions of mixed parity. Whereas single-ancilla QSP imposes constraints on the QSP phases, which both complicates the optimization over QSP phases and inhibits its robustness to noise, our method using a pre-transformation is less restricted. For example, employing PyQSP~\cite{pyqsp} to numerically estimate the complex exponential with $\tau=4$ with a length $d=14$ QSP sequence, single-ancilla QSP can be optimized to achieve an error $0.048$ (i.e. the maximal difference from the desired function over the range of interest), while our method (with $a=1/4, \ b=3/4$) obtains an error $0.027$. And furthermore, when the phases are subject to additive Gaussian noise of standard deviation $0.02$, the error of single-ancilla QSP grows to $0.085 \pm 0.029$, while that of our method becomes only $0.067 \pm 0.025$.

\subsubsection{Pre-transformation} \label{sec:coherent-be}
With our encoding of $\mathcal{H}/\alpha$, we will employ a linear pre-transformation to block encode an operator whose spectrum is proportional to that of $\mathcal{H}/\alpha$ (up to an additive constant), but shrunken to be in the range $[\frac{1-\beta}{2}, \frac{1+\beta}{2}] \subset [0,1]$ for some chosen $\beta < 1$.

To achieve this, we first block encode a rescaled Hamiltonian $\beta \mathcal{H}/\alpha$. Given the ability to block encode $\mathcal{H}/\alpha$, which is already scaled by a constant itself, it is not much more difficult to block encode $\beta \mathcal{H}/\alpha$, and hence this step may not even be necessary. If this step is needed however, it may formally be achieved by invoking Lemma 53 of Ref. \cite{Gily_n_2019} regarding products of block encoded matrices: 
\begin{lemma}\label{lemma:encoding}
If $V_A$ is a block encoding of an $n$ qubit operator $A$, and if $V_B$ is a block encoding of an $n$ qubit operator $B$, then $(I_B\otimes V_A)(V_B\otimes I_A)$ is a block encoding of $AB$, where the identity operators in this expression act on each other's ancilla spaces.
\end{lemma}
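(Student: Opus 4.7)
The plan is to prove Lemma~\ref{lemma:encoding} by a direct computation that unpacks the definition of block encoding and tracks how the ancilla registers propagate through the composed circuit. Recall that $V_A$ being a block encoding of $A$ means $A = (\bra{0}_{a_A} \otimes I_{\text{sys}}) V_A (\ket{0}_{a_A} \otimes I_{\text{sys}})$, and similarly for $V_B$ with ancilla register $a_B$. The goal is therefore to show
\begin{equation}
\bigl(\bra{0}_{a_B} \bra{0}_{a_A} \otimes I_{\text{sys}}\bigr)(I_B\otimes V_A)(V_B\otimes I_A)\bigl(\ket{0}_{a_B}\ket{0}_{a_A} \otimes I_{\text{sys}}\bigr) = AB,
\end{equation}
where $I_A$ acts on the $a_A$ register and $I_B$ acts on the $a_B$ register. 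The main obstacle is merely bookkeeping of the tensor factors, not any conceptual difficulty.

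First, I would apply the rightmost factor $V_B \otimes I_A$ to an arbitrary system state $\ket{\psi}$ together with the ancilla initial state $\ket{0}_{a_B}\ket{0}_{a_A}$. Because $I_A$ leaves the $a_A$ register untouched, the result decomposes as
\begin{equation}
(V_B \otimes I_A)\ket{0}_{a_B}\ket{0}_{a_A}\ket{\psi} = \ket{0}_{a_B}\ket{0}_{a_A}(B\ket{\psi}) + \ket{\perp_B}\ket{0}_{a_A},
\end{equation}
where $\ket{\perp_B}$ lives on the joint $(a_B,\text{sys})$ space and is orthogonal to $\ket{0}_{a_B}$ on the $a_B$ register, by the block encoding property of $V_B$.

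Next, I would apply $I_B \otimes V_A$. On the first summand, $V_A$ acts on $(a_A,\text{sys})$ as $V_A \ket{0}_{a_A}(B\ket{\psi}) = \ket{0}_{a_A}(AB\ket{\psi}) + \ket{\perp_A}$ with $\ket{\perp_A}$ orthogonal to $\ket{0}_{a_A}$ on the $a_A$ register. On the second summand, $V_A$ produces some state that still carries the $\ket{\perp_B}$ component on $a_B$, which is orthogonal to $\ket{0}_{a_B}$ and hence annihilated by the projector $\bra{0}_{a_B}$. Projecting the full output onto $\ket{0}_{a_B}\ket{0}_{a_A}$ therefore selects exactly the $AB\ket{\psi}$ term, which establishes the claim for arbitrary $\ket{\psi}$ and thus as an operator identity.

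To finish, I would remark that the argument shows the claim in the ``perfect'' case, and note how to extend it to the case in which $V_A$ and $V_B$ are only approximate block encodings (with errors $\epsilon_A$ and $\epsilon_B$): applying the triangle inequality to the analogous decomposition shows that the composed circuit is an $(\epsilon_A + \epsilon_B)$-approximate block encoding of $AB$, recovering the full statement of Lemma~53 of Ref.~\cite{Gily_n_2019}. Since the present use of the lemma in Sec.~\ref{sec:coherent-be} only invokes the exact version with $A = \beta I$ and $B = \mathcal{H}/\alpha$, the perfect-case argument already suffices for the downstream application to the pre-transformation step.
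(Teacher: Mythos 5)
Your direct computation is correct and is exactly the argument the paper has in mind: the paper does not spell out a proof of Lemma~\ref{lemma:encoding}, stating only that it is ``straightforwardly proven via direct computation,'' and your register bookkeeping (in particular the observation that $I_B\otimes V_A$ cannot rotate the $\ket{\perp_B}$ component back into the $\ket{0}_{a_B}$ subspace, so the projector $\bra{0}_{a_B}$ annihilates it) fills in that computation correctly, including the operator ordering that yields $AB$ rather than $BA$. Your closing remark on the approximate case is a harmless bonus and is consistent with Lemma~53 of Ref.~\cite{Gily_n_2019}.
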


This result, which is straightforwardly proven via direct computation, enables our construction. Suppose that $U_{\mathcal{H}/\alpha}$ is the block encoding of the $n$ qubit operator $\mathcal{H}/\alpha$, and that $U_{\beta I}$ is a block encoding of $\beta I_{2^n}$. Then, using the formula of this lemma, we may construct a block encoding of $\mathcal{H}/\alpha \cdot \beta I_{2^n} = \beta \mathcal{H}/\alpha$, as desired. Moreover, we may easily construct $U_{\beta I}$ with an $x$-rotation applied to an ancilla qubit:
\begin{equation}
\begin{split}
        R_x(2\cos^{-1}(\beta)) \otimes I_{2^n} = 
        \begin{bmatrix}
            \beta I_{2^n} & \cdot \\
            \cdot & \cdot 
        \end{bmatrix} = U_{\beta I},
\end{split}
\end{equation}
where $R_x(\theta)$ is the $x$-rotation through an angle $\theta$. Equivalently, the block encoding of $\beta \mathcal{H}/\alpha$ is $U_{\beta \mathcal{H}/\alpha} = R_x(2\cos^{-1}(\beta)) \otimes U_{\mathcal{H}/\alpha}$, which we depict in Fig.~\ref{fig:HamScalingEncoding}. As QSP already employs a sequence of rotations to develop $U^{\vec{\phi}}$, this additional x-rotation is not particularly costly.

\begin{figure}[htbp]
    \begin{center}
    \includegraphics[width=6.2cm]{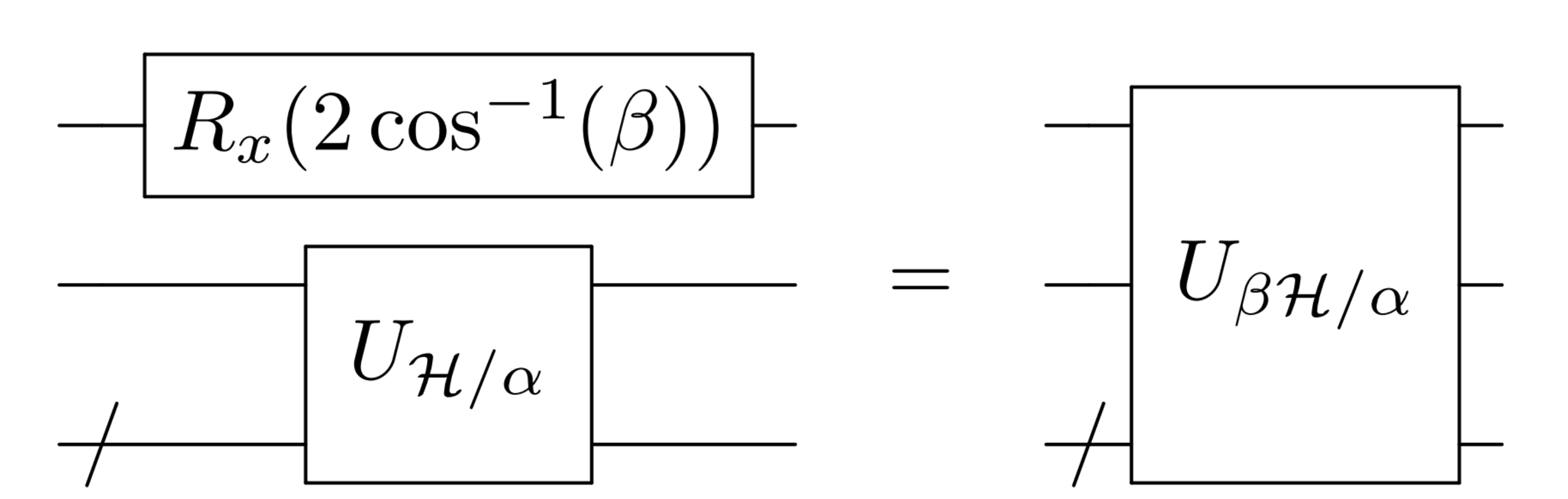}
    \end{center}
    \caption{A quantum circuit representation of the operator $U_{\beta\mathcal{H}/\alpha}$, which block encodes $\beta \mathcal{H}/\alpha$ in its $|00\rangle \langle 00 | $ matrix element. Here we have assumed that $U_{\mathcal{H}/\alpha}$ encodes $\mathcal{H}/\alpha$ in its $|0\rangle \langle 0|$ matrix element, as is conventional.}
    \label{fig:HamScalingEncoding}
\end{figure}

With this rescaled Hamiltonian, we may introduce an ancilla qubit and employ the circuit in Fig.~\ref{fig:HamSumEncoding} to obtain an encoding of $\frac{1}{2}(I+\beta \mathcal{H}/\alpha) =: \tilde{\mathcal{H}}$, which has eigenvalues in the range $[\frac{1-\beta}{2}, \frac{1+\beta}{2}] \subset [0,1]$. This is precisely our desired linear pre-transformation: $\mathcal{H}/\alpha \mapsto \frac{1}{2}(I+\beta \mathcal{H}/\alpha)$. In total then, with the addition of two extra ancilla qubits, we are able to block encode the rescaled Hamiltonian $ \tilde{\mathcal{H}} = \frac{1}{2}(I+\beta \mathcal{H}/\alpha)$ which has sufficiently bounded eigenvalues. As for accessing this operator, if the initial Hamiltonian $\mathcal{H}/\alpha$ were encoded in the $|0\rangle \langle 0|$ matrix element of a unitary, as is conventional, then the procedure sketched here encodes our desired operator in the $|000\rangle \langle 000|$ matrix element of a new unitary.

\begin{figure}[htbp]
    \begin{center}
    \includegraphics[width=5.8cm]{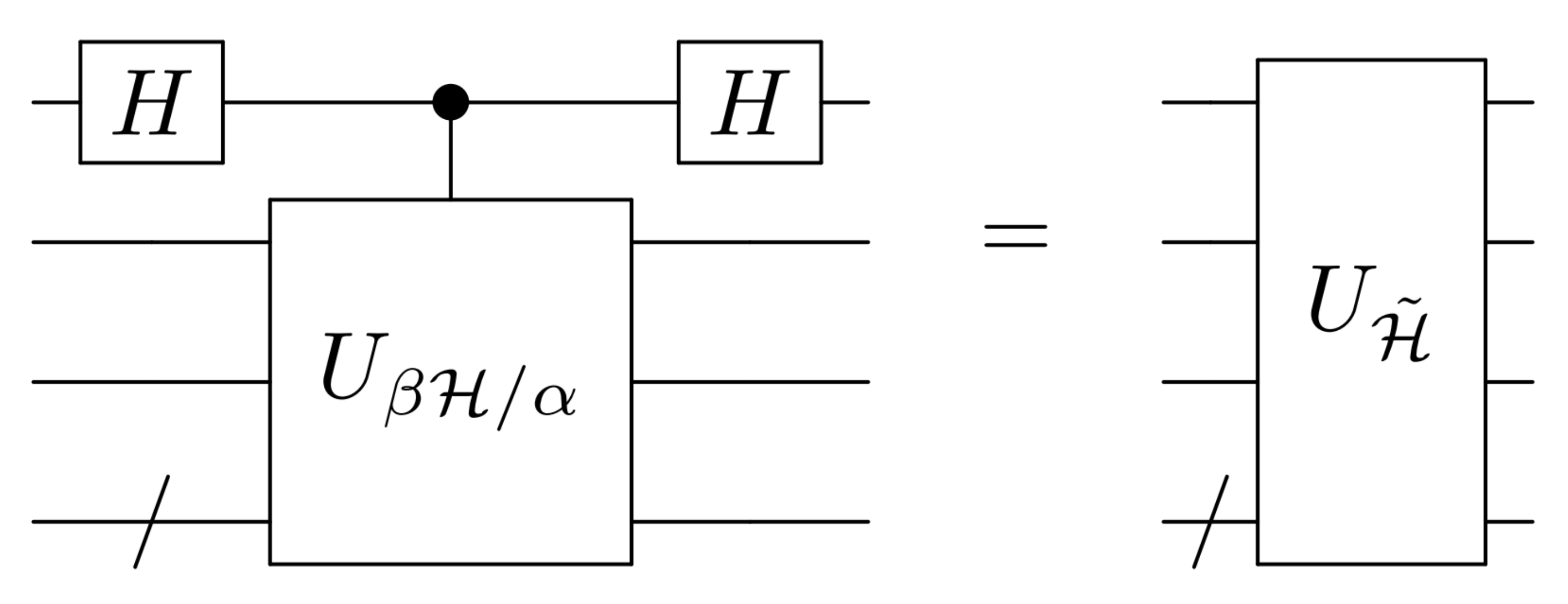}
    \end{center}
    \vspace{-10px}
    \caption{A quantum circuit representation of the operator $U_{\tilde{\mathcal{H}}}$, which block encodes $\tilde{\mathcal{H}} = \frac{1}{2}(I+\beta \mathcal{H}/\alpha)$ in its $|000\rangle \langle 000 |$ matrix element.}
    \label{fig:HamSumEncoding}
\end{figure}

\subsubsection{Target Polynomial} \label{sec:coherent-eece}
With an eye towards Hamiltonian simulation, we aim to approximate $e^{-ix\tau}$ as a polynomial, where $x$ is the input variable and $\tau$ is a real parameter representing the effective time of simulation. Unfortunately, as we discussed in Sec.~\ref{Sec:ConventionalHamSim}, we cannot approximate $e^{-ix\tau}$ as a QSP polynomial $\forall x\in[-1,1]$ because the complex exponential does not have definite parity. However, using the block encoding of $\tilde{\mathcal{H}}$ outlined above, we only need a polynomial that approximates $e^{-ix\tau}$ for $x\in [\frac{1-\beta}{2}, \frac{1+\beta}{2}] \subset [0,1]$. We may design such a polynomial by estimating the \textit{even extension of the complex exponential} (EECE):
\begin{equation}\label{eq:EECE}
    \text{EECE}(x;\tau) := \cos(\tau x)-i\sin(\tau x)\text{sign}(x).
\end{equation}
This function has definite parity, and is of magnitude $1$, so there exists a QSP polynomial $P(x) \approx \text{EECE}(x;\tau)$ which can be accessed as the $|0\rangle \langle 0| $ matrix element of a QSP sequence. In approximating this function as a polynomial, we may estimate the trigonometric functions with the truncated Jacobi-Anger expansions discussed in Sec.~\ref{Sec:ConventionalHamSim}, and employ the polynomial approximation to the sign function explained in Appendix~\ref{Sec:SignFunction}.

Formally, it suffices to select the following as a polynomial approximation to the EECE:
\begin{equation} \label{eq:OneShotPoly}
    \begin{split}
       &P_{\epsilon,\Delta}^{\text{EECE}}(x;\tau) := \\
       &\qquad \tfrac{1}{1+\epsilon/6}P^{\text{cos}}_{\epsilon/6}(x;\tau)- i \tfrac{1}{1+\epsilon/6}P^{\text{sin}}_{\epsilon/6}(x;\tau) P^{\text{sign}}_{\epsilon/3,\Delta}(x).
    \end{split}
\end{equation}
As $\frac{1}{1+\epsilon/6}P^{\text{cos}}_{\epsilon/6}(x;\tau)$ and $\frac{1}{1+\epsilon/6}P^{\text{sin}}_{\epsilon/6}(x;\tau)$ are $\epsilon/3$ approximations to $\cos(\tau x)$ and $\sin(\tau x)$, respectively, the error of this approximation for $|x|\geq \Delta/2$ is 
\begin{equation}
    \begin{split}
        &\left\|\cos(\tau x) - \frac{1}{1+\epsilon/6}P^{\text{cos}}_{\epsilon/6}(x;\tau) \right\| \ + 
        \\ &\left\|\sin(\tau x) - \frac{1}{1+\epsilon/6}P^{\text{sin}}_{\epsilon/6}(x;\tau) P^{\text{sign}}_{\epsilon/3,\Delta}(x) \right\| \\
        &\leq \epsilon/3 + \left\|\sin(\tau x) - (\sin(\tau x)-\epsilon/3)(1-\epsilon/3) \right\| \\
        &\leq \epsilon/3+2\epsilon/3 - (\epsilon/3)^2 \leq \epsilon,
    \end{split}
\end{equation}
as desired. So $P_{\epsilon,\Delta}^{\text{EECE}}(x;\tau)$ $\epsilon$-approximates the even extension of the complex exponential for $|x|\geq \Delta/2$. In addition, $P_{\epsilon,\Delta}^{\text{EECE}}(x;\tau)$ also has magnitude less than $1$ by virtue of its constituent polynomials being bounded in magnitude, as is required by the third condition of the QSP theorem. We rigorously analyze the degree of $P_{\epsilon,\Delta}^{\text{EECE}}(x;\tau)$ in Sec.~\ref{Sec:Coherent_Runtime}.

One concern with a QSP implementation of $P_{\epsilon,\Delta}^{\text{EECE}}(x;\tau)$ is that according to the third QSP condition, $|P(x\rightarrow \pm 1)| \rightarrow 1$, which $P_{\epsilon,\Delta}^{\text{EECE}}(x;\tau)$ only satisfies up to some error $\epsilon$. Fortunately, this is not a severe problem for us because $x=\pm 1$ is outside our range of interest, that being  $\big[\frac{1-\beta}{2} , \frac{1+\beta}{2} \big]$. Likewise, for reasonably small $\epsilon$, there should exist a QSP polynomial that behaves as $P_{\epsilon,\Delta}^{\text{EECE}}(x;\tau)$ and also satisfies the third condition of the QSP theorem. In Sec.~\ref{Sec:Discussion}, we numerically verify that such a QSP polynomial can be constructed.

How might we employ this polynomial? Consider applying a QET to the aforementioned encoding of $\tilde{\mathcal{H}}$ with the target polynomial $P_{\epsilon,\Delta}^{\text{EECE}}(x;\tau)$. If we select a $\Delta/2 \leq (1-\beta)/2$ such that $P_{\epsilon,\Delta}^{\text{EECE}}(x;\tau)$ acts as the complex exponential on the eigenvalues of $\tilde{\mathcal{H}}$, and an effective time $\tau = 2t\alpha/\beta$, the resulting operator of this transform will encode an $\epsilon$-approximation to
\begin{equation}
    e^{-i\frac{1}{2}(I+\beta H/\alpha) 2t\alpha/\beta } = e^{-it\alpha / \beta} e^{-i\mathcal{H}t},
\end{equation}
which is the time evolution operator, up to a global phase. So by using the polynomial $P_{\epsilon,(1-\beta)}^{\text{EECE}}(x;\ 2t\alpha/\beta)$ this procedure yields a unitary $U^{\vec{\phi}}$ and a projector $\Pi$ (which in the simple case mentioned earlier, is just $\Pi = |000\rangle \langle 000|$), such that $\| \Pi U^{\vec{\phi}} \Pi - e^{-i\mathcal{H}t}\| \leq \epsilon$, thus achieving Hamiltonian simulation!

\subsubsection{Probability of Success}\label{Sec:Coherent_Success}
As we mentioned above, we may employ our block encoding of $\tilde{\mathcal{H}}$ and polynomial approximation to the EECE to create a unitary that block encodes an $\epsilon$-approximation to $e^{-i\mathcal{H}t}$. Because $\| e^{-i\mathcal{H}t}\| = 1$, the magnitude of this block is close to $1$, which by unitarity implies that the other elements in its row/column have magnitude near zero, and that the $e^{-i\mathcal{H}t}$ block is accessed with high probability. Specializing to the case where $\mathcal{H}/\alpha$ is encoded in the $|0\rangle \langle 0|$ matrix element of a unitary and the projector of interest is $\Pi = |000\rangle \langle 000|$, this means that the application of $U^{\vec{\phi}}$ to $|000\rangle |\psi_0\rangle$ outputs a good approximation $|000\rangle e^{-i\mathcal{H}t}|\psi_0\rangle$ with high probability. Therefore, this algorithm indeed performs Hamiltonian simulation with high success probability using only a single QET call, hence the aptly chosen name ``coherent one-shot Hamiltonian simulation".

More precisely, as $\| \Pi U^{\vec{\phi}} \Pi - e^{-i\mathcal{H}t} \| \leq \epsilon$, the reverse triangle inequality indicates that the magnitude of this block is
\begin{equation}
    \| e^{-ix\tau} + \epsilon \| \geq 1-\epsilon.
\end{equation}
So unitarity dictates that the other elements in the column corresponding to the choice of projector have collective magnitude $\leq \sqrt{1-(1-\epsilon)^2}$. This implies that the probability of failure is in the worst case
\begin{equation}
    \delta = 1-(1-\epsilon)^2 \leq 2\epsilon \quad \Rightarrow \quad \delta = \Theta(\epsilon).
\end{equation}
As evidenced by this expression, the probability of failure of this algorithm is controlled by the accuracy $\epsilon$ of the EECE approximation, which may easily be tuned by selecting an appropriate QSP polynomial. We note that a similar bound on the probability of failure was derived for the Hamiltonian simulation algorithm presented in Ref. \cite{dong2021quantum}, although their result applies to the simulation of a random Hamiltonian. 

We graphically summarize the coherent one-shot algorithm in Fig.~\ref{fig:OneShotCircuit}, in which we illustrate its quantum circuit. In this figure, we emphasize that this algorithm applies a QET to $\tilde{\mathcal{H}}$ and succeeds when the ancilla qubits are measured in the state $|000\rangle$.

\begin{figure*}[htbp]
    \begin{center}
    \includegraphics[width = 11.5cm]{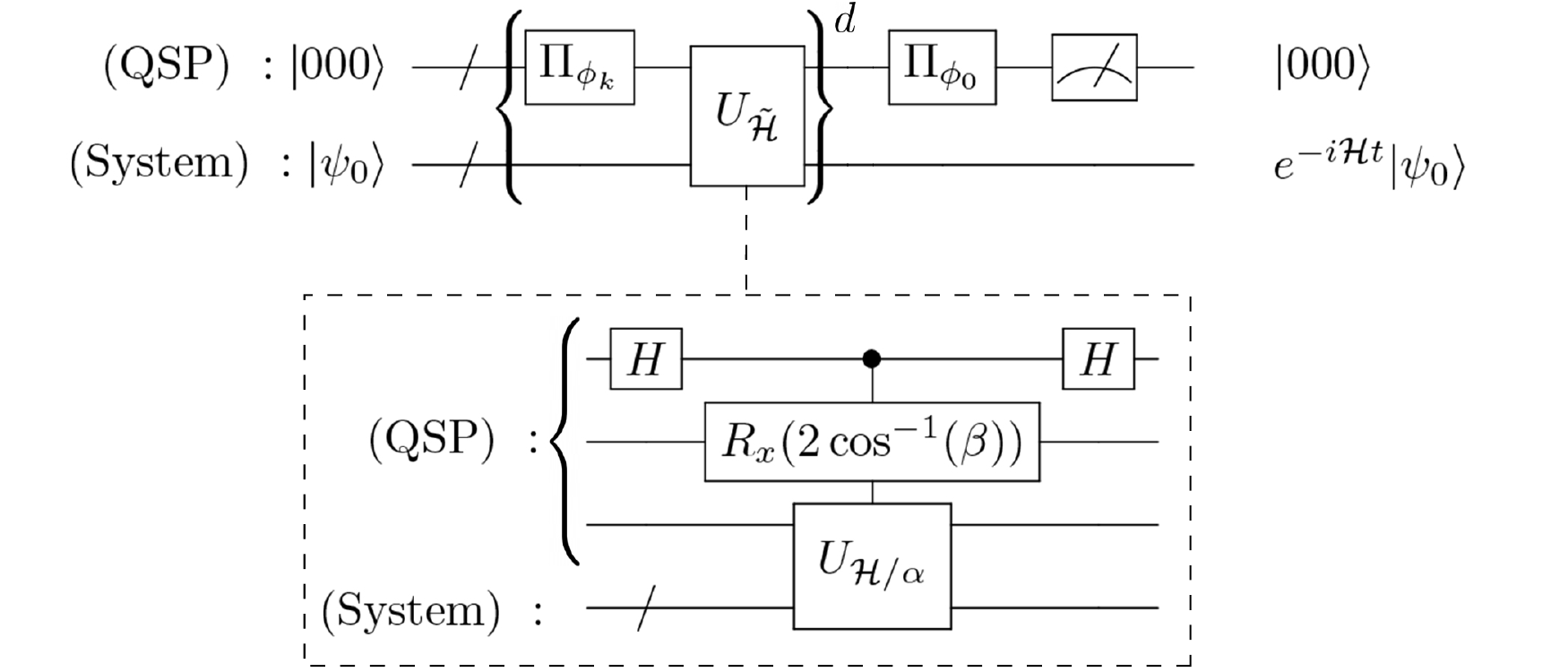}
    \end{center}
    \caption{The quantum circuit used to perform coherent one-shot simulation. Here, our projector is $\Pi = |000\rangle\langle 000|$ with corresponding projector-controlled phase shift is $\Pi_{\phi_k}$. The gates within brackets are repeated $d$ times for $k = d, d-1, ..., 1$ with phases $\vec{\phi}$ to construct the appropriate QET sequence for the polynomial $P_{\epsilon,(1-\beta)}^{\text{EECE}}(\tilde{\mathcal{H}};2t\alpha/\beta)$. This procedure outputs the correct time evolved state $e^{-i\mathcal{H}t}|\psi_0\rangle$ in the system register if the QSP qubits are measured in the state $|000\rangle$, which occurs with probability at least $1-2\epsilon$. We also note that the unitary $U_{\tilde{\mathcal{H}}}$ is defined in the inset as the block encoding of $\tilde{\mathcal{H}}$, which is attained by combining the circuits of Figs.~\ref{fig:HamScalingEncoding} and~\ref{fig:HamSumEncoding}.}
\label{fig:OneShotCircuit}
\end{figure*}

\subsection{Query Complexity}\label{Sec:Coherent_Runtime}
Let us now investigate the query complexity of coherent one-shot Hamiltonian simulation, for which we will need to determine the degree of the polynomial approximation to the EECE. As we mentioned earlier, we should choose $\Delta/2 \leq (1-\beta)/2$ such that $P_{\epsilon,\Delta}^{\text{EECE}}(x;\tau)$ acts as the complex exponential on the eigenvalues of interest, and $\tau = 2t\alpha/\beta$ such that we simulate for the correct time. Appropriately, our polynomial of interest is $P_{\epsilon,(1-\beta)}^{\text{EECE}}(x;\ 2t\alpha/\beta)$ which according to Eq.~(\ref{eq:OneShotPoly}) has degree $d = \text{deg}(P^{\text{sin}}_{\epsilon/6}(x;\ 2t\alpha/\beta)) +  \text{deg}(P^{\text{sign}}_{\epsilon/3,(1-\beta)}(x))$. 

Employing the bounds on the degrees of $P^{\text{sin}}_{\epsilon/6}(x;\ 2t\alpha/\beta)$ and $P^{\text{sign}}_{\epsilon/3,(1-\beta)}(x)$, which are discussed in Sec.~\ref{Sec:Conventional_Complexity} and Appendix~\ref{Sec:SignFunction}, respectively, we find that a sufficient degree to construct $P_{\epsilon,\Delta}^{\text{EECE}}(x;\ 2t\alpha/\beta)$ is
\begin{equation} \label{eq:NCo}
\begin{split}
    &2K \left(\frac{2t\alpha}{\beta}, \frac{\epsilon}{6} \right) + 1 + \gamma\left( \frac{\epsilon}{3},1-\beta \right) \\
    & \qquad =2 \cdot \Bigg\lfloor \frac{1}{2}r\left(\frac{e \alpha |t|}{\beta}, \frac{5\epsilon}{24} \right)\Bigg\rfloor + \gamma\left( \frac{\epsilon}{3},1-\beta \right) + 1\\
    & =: N^{\text{OS}}_{\mathcal{H}}(\epsilon, \delta=2\epsilon, \beta; t, \alpha),
\end{split}
\end{equation}
where $\gamma(\epsilon,\Delta) $ is a function explicitly defined in Eq.~(\ref{eq:gamma}). Because the transformed Hamiltonian to which QET is applied is linear in the Hamiltonian, this expression is the query complexity of our coherent one-shot Hamiltonian simulation, which we denote by $N_{\mathcal{H}}^{\text{OS}}(\epsilon, \delta=2\epsilon, \beta; t,\alpha)$ in the last line of the above equation (with superscript ``OS" for ``one-shot"). This expression implicitly bounds the probability of failure as $\delta \leq 2\epsilon$ as per Sec.~\ref{Sec:Coherent_Success}, which we note in the query complexity. 

Using the appropriate asymptotic scalings, the $r(\cdot,\cdot)$ term in $N^{\text{OS}}_{\mathcal{H}}(\epsilon, \delta=2\epsilon, \beta; t, \alpha)$ scales as $\sim 1/\beta$, whereas the $\gamma(\cdot, \cdot)$ term scales as $\sim 1/(1-\beta)$. As these terms are added together in $N^{\text{OS}}_{\mathcal{H}}(\epsilon, \delta=2\epsilon, \beta; t, \alpha)$, there is a clear trade off between choosing a large or small value for $\beta$, the optimal value of which depends on the details of the Hamiltonian (through $\alpha)$ and the desired error rate (through $\epsilon)$. Nonetheless, $\beta$ may be chosen as some fixed constant (say, $\beta = 1/2$), and hence the query complexity of coherent one-shot Hamiltonian simulation scales as 
\begin{equation}
    \begin{split}
        N_{\mathcal{H}}^{\text{OS}} = \Theta\big(\alpha |t| + \ln(\tfrac{1}{\epsilon}) + \ln(\tfrac{1}{\delta})\big).
    \end{split}
    \label{query-coh}
\end{equation}
This certainly provides an improvement in the scaling with respect to $\delta$ over the QSP-LCU+AA coherent simulation algorithm. We have thus demonstrated a central result of this paper:
\begin{theorem}[Coherent One-Shot Hamiltonian Simulation] \label{thm:coherent-one-shot}
Provided a block encoding of the Hamiltonian (possibly rescaled), the Coherent One-Shot Hamiltonian Simulation algorithm achieves efficient fully-coherent Hamiltonian simulation with arbitrarily small $\epsilon$ and $\delta$, while querying (the block encoding of) the Hamiltonian a total number of times 
\begin{equation}
    \Theta\left( \alpha |t| + \ln(\frac{1}{\epsilon}) + \ln(\frac{1}{\delta}) \right).
\end{equation}
\end{theorem}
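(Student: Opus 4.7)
The plan is to prove correctness and query complexity of the Coherent One-Shot algorithm by following the three-stage construction set up in Sec.~\ref{sec:coherent-procedure}: (i) realize the pre-transformed Hamiltonian $\tilde{\mathcal{H}} = \tfrac{1}{2}(I + \beta \mathcal{H}/\alpha)$ as a block encoding, (ii) apply QET with a polynomial that $\epsilon$-approximates the EECE on the spectral range of $\tilde{\mathcal{H}}$, and (iii) bound both the probability of failure and the total number of calls to the input block encoding of $\mathcal{H}/\alpha$.

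First I would make the pre-transformation rigorous. Lemma~\ref{lemma:encoding} combined with the ancilla $x$-rotation $R_x(2\cos^{-1}\beta)$ gives a block encoding of $\beta \mathcal{H}/\alpha$, and the Hadamard-controlled circuit of Fig.~\ref{fig:HamSumEncoding} promotes this to a block encoding of $\tilde{\mathcal{H}}$, each block-encoding step using a single query to $U_{\mathcal{H}/\alpha}$. Since $\mathcal{H}/\alpha$ has eigenvalues in $[-1,1]$, the spectrum of $\tilde{\mathcal{H}}$ lies in $\bigl[\tfrac{1-\beta}{2},\tfrac{1+\beta}{2}\bigr] \subset [0,1]$, so the ``bad'' region $[-\Delta/2,\Delta/2]$ of $P^{\text{EECE}}_{\epsilon,\Delta}$ is avoided as soon as $\Delta \le 1-\beta$.

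Next I would invoke the QET on $U_{\tilde{\mathcal{H}}}$ with the polynomial $P^{\text{EECE}}_{\epsilon,(1-\beta)}(x;\,2t\alpha/\beta)$ from Eq.~\eqref{eq:OneShotPoly}. The two conditions for a valid QSP polynomial (definite parity and magnitude $\le 1$) are enforced by construction (the sine piece is multiplied by an odd sign-approximant, and the constituent polynomials are each rescaled below unit norm). Plugging $\tilde{\mathcal{H}}$ into the approximation and using the algebra
\[
e^{-i\tilde{\mathcal{H}}\cdot 2t\alpha/\beta} = e^{-it\alpha/\beta}\,e^{-i\mathcal{H}t},
\]
yields, upon post-selection on the ancillas in $|000\rangle$, an operator within $\epsilon$ in operator norm of the time-evolution operator up to an irrelevant global phase. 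The probability-of-failure bound $\delta \le 2\epsilon$ is then immediate: unitarity of the QET circuit forces the remaining block in the same column to have norm at most $\sqrt{1-(1-\epsilon)^2} \le \sqrt{2\epsilon}$, giving $\delta \le 2\epsilon$. Choosing $\epsilon' = \min(\epsilon,\delta/2)$ as the approximation tolerance handles arbitrary $(\epsilon,\delta)$ simultaneously.

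Finally I would extract the query complexity. Each QET step uses one call to $U_{\tilde{\mathcal{H}}}$, which in turn uses one call to $U_{\mathcal{H}/\alpha}$, so the total count equals the degree of $P^{\text{EECE}}_{\epsilon,(1-\beta)}(x;\,2t\alpha/\beta)$. Adding the degrees of the truncated Jacobi--Anger piece and the sign-approximant, as in Eq.~\eqref{eq:NCo}, and substituting the asymptotic forms $r(\tau,\epsilon) = \Theta\bigl(|\tau| + \ln(1/\epsilon)/\ln(e+\ln(1/\epsilon)/|\tau|)\bigr)$ and $\gamma(\epsilon,\Delta) = \Theta((1/\Delta)\ln(1/\epsilon))$, gives a sum whose first term scales as $\Theta(\alpha|t|/\beta + \ln(1/\epsilon))$ and whose second scales as $\Theta(\ln(1/\epsilon)/(1-\beta))$. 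Fixing $\beta$ to any constant in $(0,1)$ (e.g.\ $\beta = 1/2$) balances these and yields the claimed $\Theta\bigl(\alpha|t| + \ln(1/\epsilon) + \ln(1/\delta)\bigr)$ after absorbing $\epsilon' = \min(\epsilon,\delta/2)$. The main obstacle I anticipate is verifying that the QSP polynomial $P^{\text{EECE}}_{\epsilon,(1-\beta)}$, whose magnitude only satisfies $|P(\pm 1)| \to 1$ approximately, can actually be implemented by a genuine QSP phase sequence; I would handle this by appealing to the density of QSP-realizable polynomials within the set of bounded polynomials of definite parity, noting that the small discrepancy at the boundary $x=\pm 1$ is irrelevant because the spectrum of $\tilde{\mathcal{H}}$ is strictly interior to $[0,1]$.
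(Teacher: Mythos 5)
Your proposal is correct and follows essentially the same route as the paper: pre-transformation to block-encode $\tilde{\mathcal{H}} = \tfrac{1}{2}(I+\beta\mathcal{H}/\alpha)$, QET with $P^{\text{EECE}}_{\epsilon,(1-\beta)}(x;\,2t\alpha/\beta)$, the unitarity argument giving $\delta \le 1-(1-\epsilon)^2 \le 2\epsilon$, and the degree count of Eq.~(\ref{eq:NCo}) with $\beta$ fixed to a constant. Your explicit choice $\epsilon' = \min(\epsilon,\delta/2)$ to decouple the two tolerances, and your density argument for the boundary condition $|P(\pm 1)|\to 1$ (which the paper instead addresses by numerical verification), are minor refinements of the same argument rather than a different approach.
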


\section{Comparison on the Scaling of Query Complexity}\label{Sec:Scaling}

Given the analytical bounds on query complexities of the QSP-LCU+AA and coherent one-shot simulation algorithms derived in Eqs.~(\ref{eq:NAmp}),~(\ref{eq:N_OAA}), and~(\ref{eq:NCo}), we now study the behavior of the query complexity with respect to simulation time and error. We compare and contrast the scaling of the complexity between different algorithms discussed above. We note that a similar analysis is presented in Ref.~\cite{childs2018toward}, wherein the query complexities of various Hamiltonian simulation algorithms are compared with each other and with empirical bounds, although coherence is not a focal point of their work.

Let us begin by investigating the scaling of the query complexity with time. With parameter choices $\alpha=5$, $\beta = 0.5$, and fixed $\epsilon =  0.02$ we calculate $N_{\mathcal{H}}^{\mathrm{AA}}$, $N_{\mathcal{H}}^{\mathrm{ROAA}}$, and $N_{\mathcal{H}}^{\mathrm{OS}}$ for a variety of $\epsilon$ using Eqs.~(\ref{eq:NAmp}) and~(\ref{eq:NCo}), respectively. Because the coherent one-shot algorithm naturally has failure probability bounded above by $2\epsilon$, we set $\delta=2\epsilon$ in the computation of $N_{\mathcal{H}}^{\mathrm{AA}}$, which provides a fair comparison between the two algorithms. We plot these results in the left panel of Fig.~\ref{fig:QueryComplexity}, showcasing all of the query complexities in the main figure and zooming into $N_{\mathcal{H}}^{\mathrm{OS}}$ and $N_{\mathcal{H}}^{\mathrm{ROAA}}$ in the inset. As expected, these query complexities all scale linearly with time, with the QSP-LCU+AA method growing the most rapidly. The inset indicates that $N_{\mathcal{H}}^{\mathrm{OS}}$ has a shallower slope than $N_{\mathcal{H}}^{\mathrm{ROAA}}$, suggesting the superiority of the coherent one-shot algorithm for long time simulations. 

Next, we look at the scaling of query complexity with error, $\epsilon$. Again, with parameter choices $\beta = 0.5$, $\alpha=5$, and fixed $t=5$, we calculate $N_{\mathcal{H}}^{\mathrm{AA}}$, and $N_{\mathcal{H}}^{\mathrm{OS}}$, and $N_{\mathcal{H}}^{\mathrm{ROAA}}$ for a variety of errors $\epsilon$, and set $\delta=2\epsilon$ to provide an accurate comparison between the algorithms. As before, we plot the results in the right panel of Fig.~\ref{fig:QueryComplexity}. We see that the QSP-LCU+AA algorithm attains the greatest query complexity with the most severe scaling. Moreover, the inset shows that $N_{\mathcal{H}}^{\mathrm{OS}}$ exceeds $N_{\mathcal{H}}^{\mathrm{ROAA}}$ for small $\epsilon$ indicating the utility of oblivious amplitude amplification for achieving highly accurate simulations. 

In summary, both our coherent one-shot algorithm and the QSP-LCU+ROAA algorithms achieve significant speedup as compared to QSP-LCU+AA. However, there is a performance trade-off depending on the parameter regimes of simulation error $\epsilon$ and simulation time $t$ as evident from the crossing in the insets of Fig. \ref{fig:QueryComplexity}. In the case of $\delta = 2\epsilon$, our coherent one-shot algorithm out-performs QSP-LCU+ROAA for long simulation time with a moderate error tolerance.

\begin{figure*}[htbp]
  \includegraphics[width=15cm,height=7.5cm]{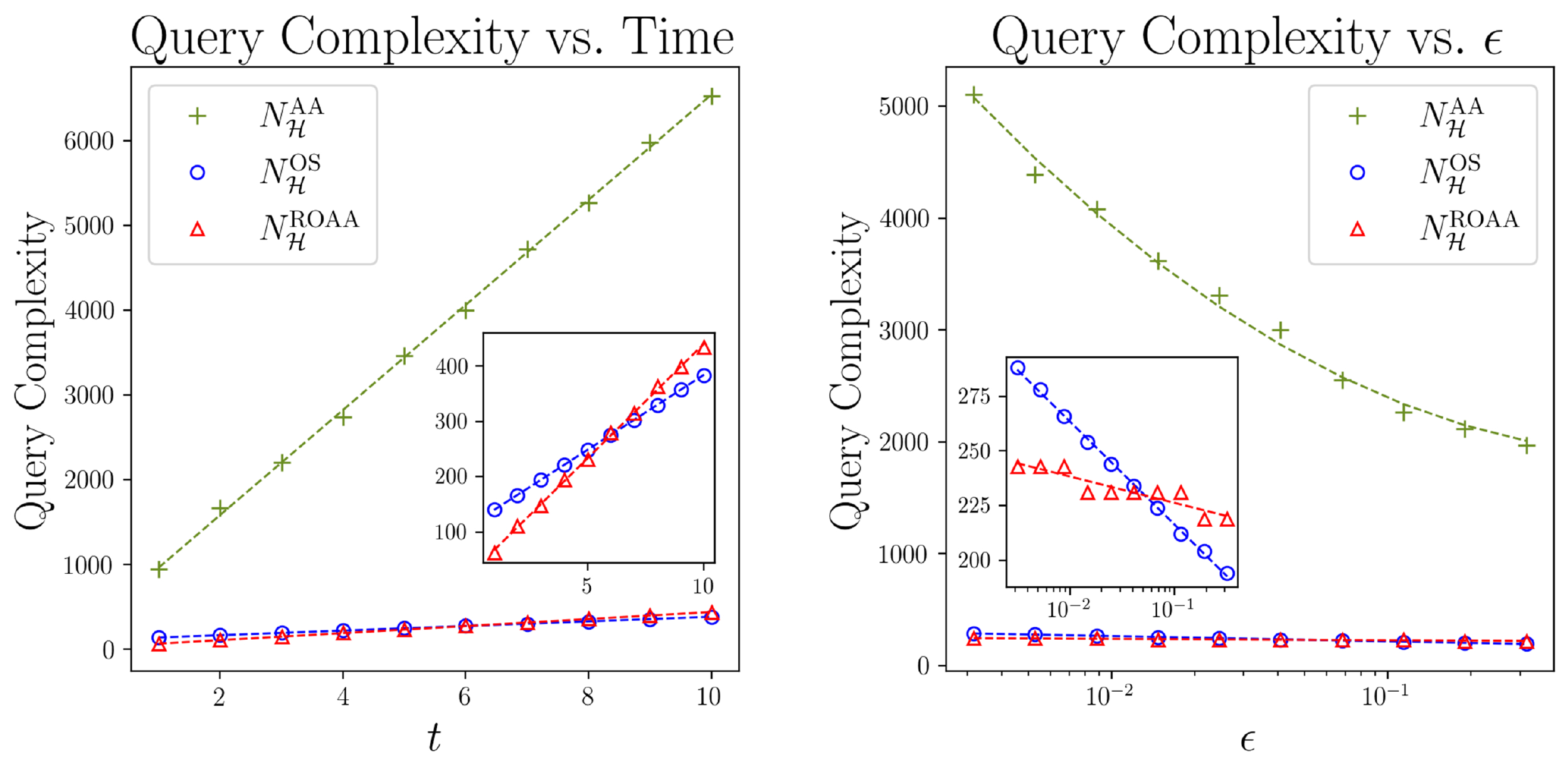}
  \caption{Query complexity vs. $t$ and $\epsilon$ for the coherent Hamiltonian simulation algorithms presented in this paper ($\text{N}_{\mathcal{H}}^{\text{AA}}$ for QSP-LCU+AA, $N_{\mathcal{H}}^{\mathrm{ROAA}}$ for robust oblivious amplitude amplification, and $\text{N}_{\mathcal{H}}^{\text{OS}}$ for coherent one-shot simulation), as well as their lines of best fit. The insets zoom into $\text{N}_{\mathcal{H}}^{\text{OS}}$ and $\text{N}_{\mathcal{H}}^{\text{ROAA}}$ for a more clear comparison.}
  \label{fig:QueryComplexity}
\end{figure*}

\section{Applications and Simulation Results}
\label{Sec:Applications}

One of the most promising applications of quantum algorithms is simulating the dynamics of interacting quantum many-body systems, which is known to be exponentially hard on classical computers due to particle-particle interactions. In this section, we employ the aforementioned fully-coherent simulation algorithms to simulate the Heisenberg model in both constant and time-dependent external fields (Sec. \ref{Sec:heisenberg}), as well as the electronic charge migration dynamics in a hydrogen molecule (Sec. \ref{sec:electronic-dynamics}). While our numerical simulations are limited to two spins in the Heisenberg model and minimal basis for the hydrogen molecule, they suffice to demonstrate the strengths of these fully-coherent algorithms. These algorithms can be readily applied to systems of larger size and to more complex molecules, provided an efficient block-encoding of the system Hamiltonian. We restrict our numerical simulations to the QSP-LCU+ROAA and the coherent one-shot algorithm, as these techniques have exhibited the greatest efficiencies in terms of query complexity.

\subsection{The Heisenberg Model}\label{Sec:heisenberg}

We use the QSP-LCU+ROAA and coherent one-shot Hamiltonian simulation algorithms to simulate a two-spin Heisenberg model with nearest neighbor interactions and an external magnetic field on each site. We discuss the full details of this model in Sec.~\ref{sec:heisenberg-ham}. The simulation results are presented in Sec.~\ref{sec:heisenberg-dyn}, which further illustrate the advantages of the full-coherence property in Hamiltonian simulation.

\subsubsection{The Hamiltonian} \label{sec:heisenberg-ham}

We focus on the one-dimensional Heisenberg spin chain with nearest-neighbor interaction, subjected to a site-specific time-dependent magnetic field along the $z$-direction, whose Hamiltonian is
\begin{align}
    \mathcal{H}(t) &= \mathcal{H}_0(t) + \mathcal{H}_1, \label{eq:heisenberg-Hfull}\\
    \mathcal{H}_0(t) &= \sum_{j = 1}^n h_j(t) \sigma_j^z, \\
    \mathcal{H}_1 &= \sum_{j = 1}^n ( g_j^x \sigma_j^x \sigma_{j+1}^x + g_j^y \sigma_j^y \sigma_{j+1}^y + g_j^z \sigma_j^z \sigma_{j+1}^z ),
    \label{eq:heisenberg-H1}
\end{align}
where $n$ is the total number of spins, $\{ g_j^x, g_j^y, g_j^z \}$ are the interaction strengths along $x, y, z$ directions between spins $j$ and $j+1$, and $h_j(t)$ is the on-site (time-dependent) magnetic field for spins $j = 1, \ldots, n$. At a high level, $\mathcal{H}_0(t)$ and $\mathcal{H}_1$ are the non-interacting term for each spin in the magnetic field and the pairwise nearest-neighbor interaction term, respectively. 

We further choose a system size of $n=2$. Although the simulation algorithms discussed in this paper apply equally well to larger systems as long as one is provided a block encoded Hamiltonian, simulations of larger systems quickly become difficult to emulate on classical hardware. The two spin Heisenberg Hamiltonian components $\mathcal{H}_0$ and $\mathcal{H}_1$ can be written explicitly in the computational basis $\{ \ket{00}, \ket{01}, \ket{10}, \ket{11} \}$ of $\sigma_z$ eigenvectors as
\begin{align}
    \mathcal{H}_0 &= \text{diag}\{ h_+(t), h_-(t), - h_-(t), - h_+(t) \}, \nonumber \\
    \mathcal{H}_1 &= 
    \begin{bmatrix}
        g_z &&& g_x - g_y \\
        & -g_z & g_x + g_y& \\
        &g_x + g_y& -g_z & \\
        g_x - g_y&&& g_z
    \end{bmatrix},
    \label{2spin-h}
\end{align}
where $h_+(t) = h_1(t) + h_2(t)$, $h_-(t) = h_1(t) - h_2(t)$, and $\ket{0}$ and $\ket{1}$ represent spin up and down, respectively. It can be shown that the commutator between two instances of the Hamiltonian evaluated at different times is
\begin{equation} \label{eq:commutator}
\begin{split}
    &[\mathcal{H}(t_1), \mathcal{H}(t_2)] =\\
    & \ \  2i \big( (g_y \Delta h_2 - g_x \Delta h_1) \sigma_1^y \sigma^x_2 + (g_y\Delta h_1 - g_x \Delta h_2) \sigma^x_1 \sigma^y_2 \big),
\end{split}
\end{equation}
where $\Delta h_j = h_j (t_2) - h_j (t_1)$ for $j = 1, 2$. Therefore, in general $[\mathcal{H}(t_1), \mathcal{H}(t_2)] \ne 0$ if $t_2 \ne t_1$, so instances of the Hamiltonian corresponding to different times do not necessarily commute.

\subsubsection{Simulating the Spin Dynamics} \label{sec:heisenberg-dyn}

To illustrate the advantages of full-coherence in Hamiltonian simulation, we use the QSP-LCU+ROAA and coherent one-shot methods to simulate the spin dynamics of the Heisenberg Hamiltonian given in Eq.~\eqref{2spin-h} for both time-independent \emph{and} time-dependent magnetic fields. The data and scripts used in our simulation are documented in Ref. \cite{mlcc2021repo}.

After running both algorithms on a given initial state $\ket{\psi_0}$ with a fixed number of queries to the Hamiltonian block-encoding, we measure the expectation values of the first spin's $z$ component, $\langle \sigma^z_1 \rangle$, for various simulation time intervals, and compare them to analytic results obtained by applying the exact time evolution operator to $\ket{\psi_0}$. In the following subsections, we provide details on the block-encoding used and then explore the numerical results for the time-independent and time-dependent Heisenberg Hamiltonian simulations.

\subsubsubsection{Block-Encoding} \label{sec:numsim-be}

As a necessary preliminary to both algorithms, we first construct unitary matrices which block-encode the total two-spin Hamiltonian. For the QSP-LCU+ROAA algorithm, we require a block encoding $U_{\mathcal{H}/\alpha}$ of $\mathcal{H}/\alpha$, for which it suffices to choose:
\begin{align}
    U_{\mathcal{H}/\alpha} = 
    \begin{bmatrix}
        \mathcal{H} / \alpha & \sqrt{I - \mathcal{H}^2/\alpha^2} \\
        \sqrt{I - \mathcal{H}^2/\alpha^2} & - \mathcal{H} / \alpha
    \end{bmatrix}
    \label{eq:be-2spin}
\end{align}
where $\mathcal{H} = \mathcal{H}_0 + \mathcal{H}_1$ given by Eq.~\eqref{2spin-h} and the rescaling factor $\alpha \geq \| \mathcal{H} \|$ ensures that $\sqrt{I - \mathcal{H}^2/\alpha^2}$ is real. We illustrate a circuit based implementation of $U_{\mathcal{H}/\alpha}$ in Appendix~\ref{sec:unitaryBlockEncodingCirc}.

For the coherent one-shot algorithm, we employ the pre-transformation shown in the circuit of Fig.~\ref{fig:OneShotCircuit} to construct a block encoding $U_{\tilde{\mathcal{H}}}$ of $\tilde{\mathcal{H}} = \frac{1}{2}(I+\beta \mathcal{H}/\alpha)$. Then, the quantum circuits of Fig. \ref{fig:ObliviousAmplitudeAmpSimulationCircuit} is used to simulate the time evolution with the QSP-LCU+ROAA method, while the circuit in Fig.~\ref{fig:OneShotCircuit} is employed to simulate via the coherent one-shot algorithm.

\subsubsubsection{Time-Independent Simulation} \label{sec:numsim-t-ind}

We begin by using both of the efficient fully-coherent algorithms to simulate the spin dynamics of the aforementioned Heisenberg model with a constant external field, while making a fixed number of queries to the Hamiltonian. 

The initial state is chosen to be $\ket{\psi_0} = \ket{00}$, indicating that both spins are initially spin-up. We then choose a value for the simulation time interval $t$ and, using the Qiskit Aer quantum circuit simulator package \cite{Qiskit}, apply the quantum circuits for each algorithm
to a copy of the state $\ket{\psi_0}$ to obtain the time-evolved states $\ket{\psi_\textrm{ROAA}} = \left[e^{-i\mathcal{H}t}\right]_{\textrm{ROAA}}\ket{\psi_0}$ and $\ket{\psi_\textrm{OS}} = \left[e^{-i\mathcal{H}t}\right]_{\textrm{OS}}\ket{\psi_0}$, which correspond to the outputs of the QSP-LCU+ROAA and coherent one-shot algorithms for simulation time $t$, respectively. Finally, for each time-evolved state $\ket{\psi}\in\{\ket{\psi_\textrm{ROAA}},\ket{\psi_\textrm{OS}}\}$ we compute the expectation value of the first spin's $z$ component with
\begin{equation}\label{eq:spinExpectation}
    \langle \sigma_1^z \rangle = \bra{\psi}(\sigma^z \otimes I)\ket{\psi}.
\end{equation}
This expectation value may be evaluated readily on a quantum computer by repeatedly preparing the state $\ket{\psi}$ and measuring the first qubit in the computational basis (which is the $\sigma^z$ eigenbasis). Suppose a total of $n=n_0+n_1$ such measurements are performed on $n$ copies of $\ket{\psi}$, such that $n_0$ measurements observe the first qubit in the state $\ket{0}$, and $n_1$ measurements in the state $\ket{1}$. Then the expectation value $\langle \sigma_1^z\rangle$ is approximately given by 
\begin{align}
    \langle \sigma_1^z\rangle \approx \frac{1}{n}(n_0 - n_1).
\end{align}
In any practical implementation of these algorithms, the number of measurements $n$ can be chosen large enough such that the measured $\langle \sigma_1^z\rangle$ converges satisfactorily to its true value. This entire process of preparing the algorithmically time-evolved state and inferring the expectation value $\langle \sigma_1^z\rangle$ is repeated for all simulation times $t$ over the time domain of interest.

To provide a standard to benchmark the results of our QSP simulation algorithms, we determine the exact time-evolution operator by solving for the eigenvalues $\lambda$ of the Hamiltonian (with associated eigenvectors $\ket{\lambda}$) and explicitly computing the exact time-evolution operator as
\begin{equation}\label{eq:exact_evolution}
    \left[e^{-i\mathcal{H}t}\right]_{\textrm{Exact}} = \sum_\lambda e^{-i \lambda t}\ket{\lambda}\bra{\lambda}.
\end{equation}
The exact values for $\langle \sigma_1^z \rangle$ can be calculated by substituting $\ket{\psi} =  \left[e^{-i\mathcal{H}t}\right]_{\textrm{Exact}}\ket{\psi_0}$ into Eq.~(\ref{eq:spinExpectation}).

As an example, we first set $h_1(t) = h_2(t) = 0.5$ for all times $t$ and $g_x = 1$, $g_y = g_z = 0$. The spin expectation $\langle \sigma_1^z \rangle$ and absolute error $\abs{\langle \sigma_1^z \rangle - \langle \sigma_1^z \rangle_\textrm{Exact}}$ were measured using both algorithms over the time domain $t \in [0,3.5]$.

\begin{figure*}[htbp]
  \includegraphics[width=14cm,height=7cm]{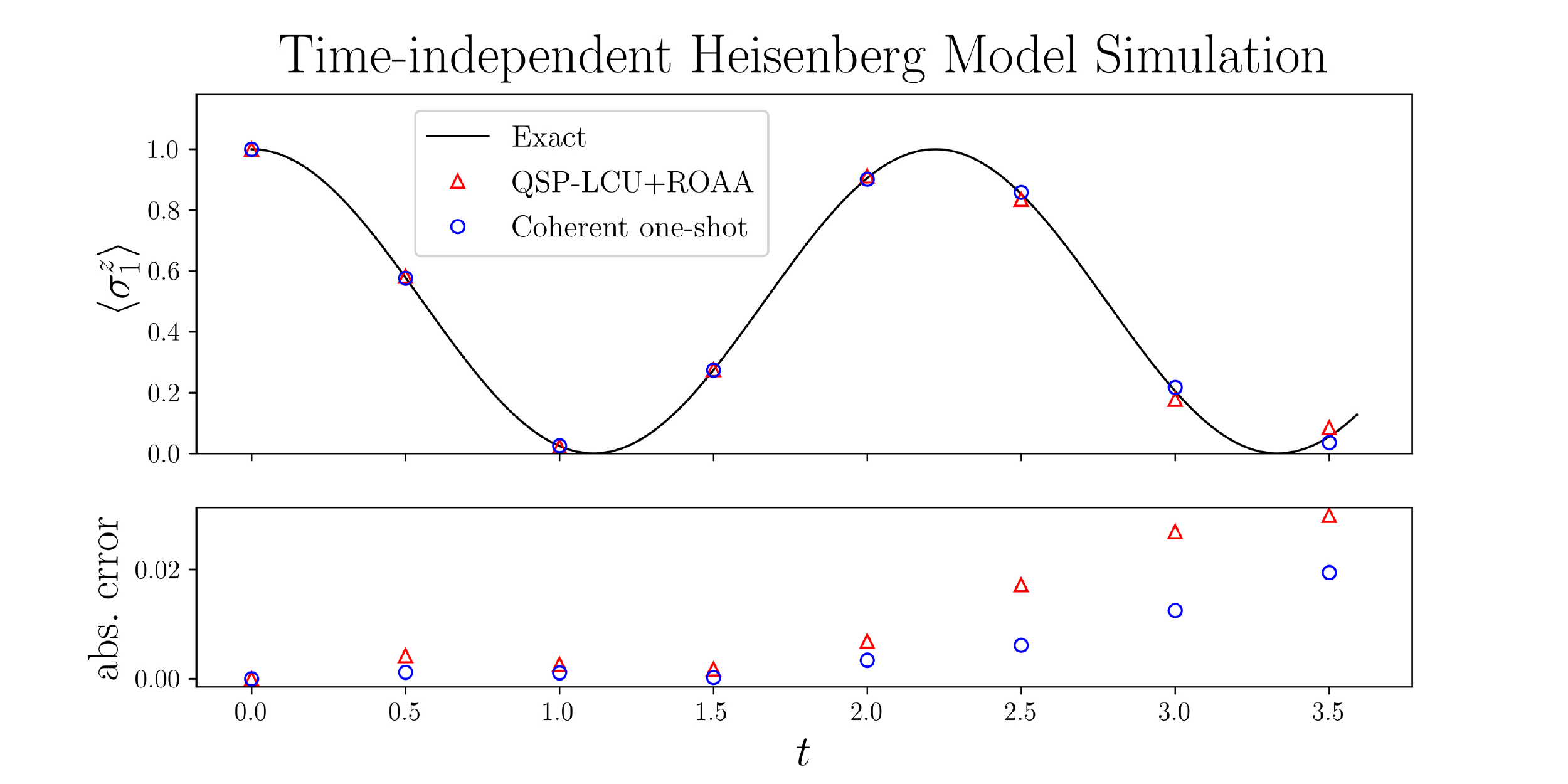}
  \caption{Hamiltonian simulation for a two-spin Heisenberg model with a constant magnetic field $h_1 = h_2 = 0.5$ using both the conventional LCU algorithm with robust oblivious amplitude amplification (QSP-LCU+ROAA) and the coherent one-shot algorithm. The QSP-LCU+ROAA scheme made 33 queries to the Hamiltonian block-encoding for each simulation interval, while the one-shot scheme made 32. \textbf{(Top)} Expectation value of the first spin's $z$ component at each simulation time interval. \textbf{(Bottom)} Absolute error between algorithm output and exact result at each simulation time interval. } 
  \label{fig:timeIndependent}
\end{figure*}

The degrees of the QSP polynomials used for the two simulation schemes were chosen so that the number of queries made to the Hamiltonian block encoding in each algorithm would be nearly equivalent. For the QSP-LCU+ROAA algorithm, polynomials with degrees $d_{\textrm{cos}} = 6$ and $d_{\textrm{sin}} = 5$ were used to approximate $\cos(\mathcal{H}t)$ and  $-i\sin(\mathcal{H}t)$, respectively. The total number of queries made to the Hamiltonian block encoding for the entire QSP-LCU+AA algorithm is thus $N^{\textrm{ROAA}}_\mathcal{H} = 3(d_{\textrm{cos}}+d_{\textrm{sin}}) = 33$. Note that $N^{\textrm{ROAA}}_\mathcal{H}$ is necessarily odd here because $d_{\textrm{sin}}$ must both be odd while $d_{\textrm{cos}}$ must be even. On the other hand, for the coherent one-shot algorithm, the number of queries $N^{\textrm{OS}}_\mathcal{H}$ is necessarily even since the EECE is approximated with an even degree polynomial; subsequently, the EECE was approximated with degree $d_\textrm{EECE} = 32$ so that $N^{\textrm{OS}}_\mathcal{H} = 32$ was nearly equal to $N^{\textrm{AA}}_\mathcal{H}$. 

The QSP phases corresponding to these polynomials were generated using the PyQSP optimization package with the scaling parameters $\alpha = 1.5$ and $\beta = 0.4$. Again, the parameter $\alpha$ was chosen to be slightly larger than the largest eigenvalue of $\mathcal{H}$ (which is equal to $\sqrt{2}$ in this case) so that all blocks of the Hamiltonian block encoding are real. Regarding the parameter $\beta$, recall that by Eq.~(\ref{eq:NCo}) we expect an optimal $\beta$ to exist for a given Hamiltonian and the desired error probability. However, the exact value of the optimal $\beta$ also depends on the choice of QSP polynomial, and therefore may be difficult to calculate directly. Recognizing nonetheless that very long effective simulation times $\tau \propto 1/\beta$ must lead to increased error (and therefore decreased success probability), we selected a sufficiently large $\beta$ to reduce the effective simulation time and boost the success probability. Note that $\beta$ cannot be chosen too large, since the polynomial approximation to the EECE over the range $[\frac{1-\beta}{2}, \frac{1+\beta}{2}]$ will become inaccurate if this range is too wide.

The simulation results for the time-independent Heisenberg model are plotted in Fig. \ref{fig:timeIndependent}. While both algorithms very closely track the exact result for $t \leq 1.5$, the outputs of both algorithms begin to gradually accrue error at later time points. In this particular simulation, we observe that our novel coherent one-shot method introduces less error than QSP-LCU+ROAA at these later time points when each algorithm is provided (nearly) the same number of Hamiltonian queries; this observation is corroborated by the results of Fig. \ref{fig:QueryComplexity}, which suggest that, in the large-time limit, one-shot should require fewer queries than QSP-LCU+ROAA to achieve a given simulation error. 



Lastly, the success probability ($1-\delta$) of each algorithm is determined by the probability that the desired block containing the $e^{-i \mathcal{H}t}$ matrix is accessed upon measurement of the circuit. The success probabilities for the QSP-LCU+ROAA and coherent one-shot algorithms were calculated to be $99.8\%$ and $97.4\%$ on average over the range of time $t\in[0,3.5]$, respectively. These near-unity success probabilities indicate that the two methods indeed maintain a high degree of coherence.

\subsubsubsection{Time-Dependent Simulation} \label{sec:time-dependent}

We now explore the application of both efficient fully-coherent simulation algorithms to a Heisenberg model with a time-dependent magnetic field. In general, the time-evolution operator $\mathcal{U}_\mathcal{H}(t)$ of a time-dependent Hamiltonian $\mathcal{H}(t)$ corresponding to an evolution time $t$ is given by
\begin{equation}\label{eq:timeEvolutionOperator}
    \mathcal{U}_\mathcal{H}(t) = \mathcal{T}\exp[-i \int_0^t \mathcal{H}(t')dt']
\end{equation}
where $\mathcal{T}$ is the time-ordering operator. This expression indicates that if the Hamiltonian does not commute with itself at different times, it does not suffice to naively integrate the Hamiltonian over the desired simulation period and exponentiate to obtain the time-evolution operator. Hence, time-dependent simulation cannot be performed with only a single call to either fully-coherent simulation algorithm, a byproduct of the Hamiltonian's time-dependence and non-commutativity (as in Eq.~(\ref{eq:commutator})).

To circumvent this difficulty, we Trotterize the full time-evolution operator into the product of many short time-evolution operators where the Hamiltonian is assumed to be constant over each short time period:
\begin{equation}\label{eq:approximateTimeEvolution}
    \mathcal{U}_\mathcal{H}(t) \approx \prod_{k=0}^{L-1}\exp[-i\mathcal{H}(k\Delta t) \Delta t].
\end{equation}
Here, $L$ is the number of discrete (Trotter) time steps used in the decomposition, and $\Delta t = t/L$ is the time step size. In the limit that $L\rightarrow\infty$ and $\Delta t\rightarrow 0$, the product approaches the exact time-evolution operator. 

To implement Trotterization with our two algorithms, we employ each of the QSP-LCU+ROAA and coherent one-shot techniques to approximate the complex exponentials $\exp[-i\mathcal{H}(k \Delta t)\Delta t]$ for $k = 0,1\ldots, L-1$, resulting in a length $L$ sequence for each algorithm. To simulate the time-evolution, the operators of each sequence are applied successively to a given input state $\ket{\psi_0}$ in accordance with Eq.~(\ref{eq:approximateTimeEvolution}), and the final output state is a close approximation to $\mathcal{U}_\mathcal{H}(t)\ket{\psi_0}$ for sufficiently small $\Delta t$.

Note that an implementation of Trotterization using an incoherent simulation algorithm (with a non-unit success probability) will require post-selection after each time slice. For the QSP-LCU simulation algorithm of Sec.~\ref{Sec:ConventionalHamSim}, which has probability of success close to $1/4$, this implies that the overall success probability decreases as $(1-\delta) \sim  4^{-L}$, which approaches zero exponentially fast with an increasing number of Trotter steps $L$. This ruins the quantum simulation quickly and essentially forbids long-time Hamiltonian simulations. In contrast, we expect that our coherent Hamiltonian simulation protocol provides significant speed up in this case, because the simulation of $L$ time slices may be chained together coherently via simple multiplication without an exponential increase in the probability of failure.

Also note the fundamental difference between our strategies for simulating the time-independent and time-dependent cases: for the time-independent case, the output at each time $t$ was the result of a single call to one of the fully-coherent simulation algorithms, and a new polynomial approximation to the complex exponential $e^{-ix t}$ was generated for each simulation time $t$. In contrast, for the time-dependent case, the output at each time $t = k\Delta t$ is the result of $k$ calls to one of our coherent simulation algorithms applied successively to the input state, and the \textit{same} polynomial approximation to $e^{-ix \Delta t}$ is used for each call. The disadvantage of the latter strategy is that the total query complexity of the simulation acquires a multiplicative prefactor equal to the number of Trotter steps $L$, as is analyzed in the following. 

The query complexity of the single call to coherent one-shot algorithm is given by Eq. \eqref{query-coh}, such that the query complexity of $L$ calls is given by
\begin{align}
    N_{\mathcal{H}}^{\rm OS, L} = \Theta \bigg( L \cdot \left[ \alpha |\Delta t| + \ln(\frac{1}{\epsilon'}) + \ln(\frac{1}{\delta'}) \right] \bigg),
    \label{query-coh-trotter1}
\end{align}
where $\epsilon'$ and $\delta'$ are the error and failure probability for each call of time $\Delta t$. To guarantee a final error of $\epsilon$ and final failure probability $\delta$ after $L$ calls, it is sufficient to select $\epsilon' < \epsilon / L$ and $\delta' < \delta / L$, which upon substitution into Eq.~\eqref{query-coh-trotter1} yields
\begin{align}
    N_{\mathcal{H}}^{\rm OS, L} = \Theta \bigg( \alpha |t| + L \ln(\frac{L}{\epsilon}) + L \ln(\frac{L}{\delta}) \bigg).
    \label{query-coh-trotter2}
\end{align}
It is easy to see that $N_{\mathcal{H}}^{\rm OS, L} \ge N_{\mathcal{H}}^{\rm OS}$ by comparison with Eq. \eqref{query-coh}, with equality at $L = 1$, as is expected.

As a benchmark for our simulations of the time-dependent model, we first calculate the exact time-evolution operator $\mathcal{U}_\mathcal{H}(t)$ over the desired domain of $t$. We begin with the time-dependent Schr\"odinger equation for the evolution of the state $\ket{\psi(t)}$:
\begin{equation}
    i \frac{d}{dt}\ket{\psi(t)} = \mathcal{H}(t)\ket{\psi(t)}.
\end{equation}
Substituting $\ket{\psi(t)} = \mathcal{U}_\mathcal{H}(t)\ket{\psi_0}$ into the above equation and noting that $\ket{\psi_0}$ is constant, we obtain the following ordinary differential equation:
\begin{equation}\label{eq:timeEvolutionODE}
    i \frac{d}{dt}\mathcal{U}_\mathcal{H}(t) = \mathcal{H}(t)\mathcal{U}_\mathcal{H}(t).
\end{equation}
Observe that solutions $\mathcal{U}_\mathcal{H}(t)$ of this equation will be of the form given in Eq.~(\ref{eq:timeEvolutionOperator}). The matrix version of Eq. \eqref{eq:timeEvolutionODE} was solved numerically in Mathematica for finely-spaced discrete times $t\in[0,15]$, and the exact value for $\langle \sigma_1^z \rangle$ at a given time $t$ was determined by applying the corresponding exact solution of $\mathcal{U}_\mathcal{H}(t)$ to $\ket{\psi_0}$ and substituting the resulting state into Eq.~(\ref{eq:spinExpectation}).

\begin{figure*}[htbp]
  \includegraphics[width=14cm,height=7cm]{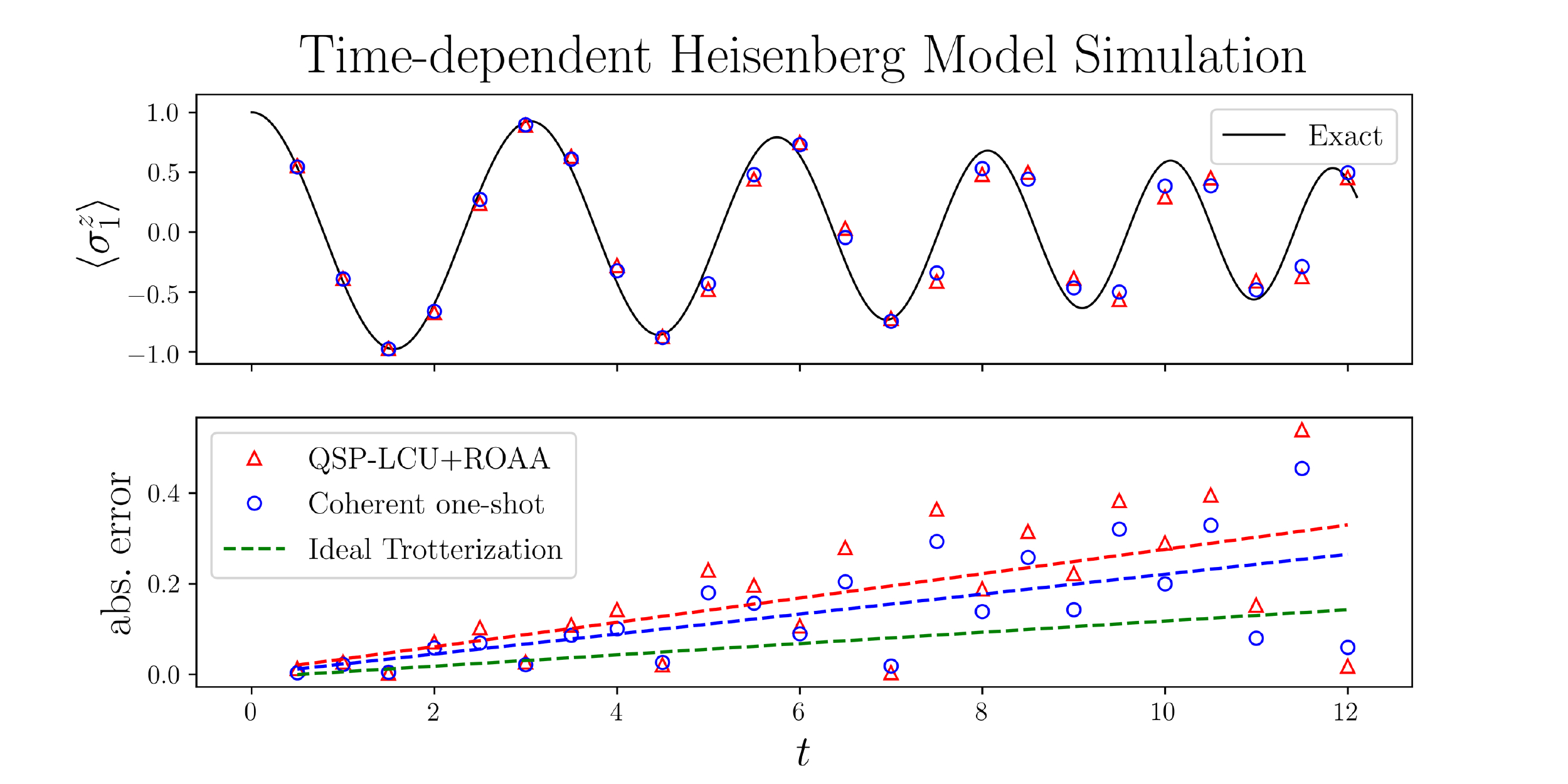}
  \caption{Hamiltonian simulation for a two-spin 1D Heisenberg model with a linearly increasing magnetic field $h_1(t) = h_2(t) = t/15$ using both the conventional LCU algorithm with robust oblivious amplitude amplification (QSP-LCU+ROAA) and the coherent one-shot algorithm. The QSP-LCU+ROAA scheme made 15 queries to the Hamiltonian block-encoding for each Trotter step, while the one-shot scheme made 14. A Trotter step size of $\Delta t = 0.5$ is used. \textbf{(Top)} Expectation value of the first spin's $z$ component $\langle \sigma^z_1 \rangle$ at each simulation time interval. \textbf{(Bottom)} Absolute error between algorithm output and exact result at each simulation time interval. Trendlines (appropriately colored) are also included. The (non-QSP) error intrinsic to the Trotterization was calculated at each step, and a linear trendline for this ideal Trotterization error is shown in green.} 
  \label{fig:timeDependent}
\end{figure*}

We simulated the evolution of the Heisenberg system for $t \in [0, 12]$ with the same initial state $\ket{\psi_0} = \ket{00}$ as in the time-independent case, but instead with a linearly increasing onsite magnetic field $h_1(t) = h_2(t) = t/15$. Note for this simulation that $g_x = 1$, $g_y = g_z = 0$ were chosen again. A Trotter step size of $\Delta t = 0.5$ was used in the simulation, corresponding to a total number of Trotter steps $L = 24$. At each step, in the QSP-LCU+ROAA scheme, we used degrees $d_{\text{cos}}=2$ and $d_{\text{sin}}=3$ to approximate $\cos(\mathcal{H}\Delta t)$ and $\sin(\mathcal{H}\Delta t)$, respectively; in the coherent one-shot scheme, we used a $d=14$ degree polynomial. Thus, the number of queries made to the Hamiltonian per step (or equivalently, per call to the simulation algorithm being used) was 15 for QSP-LCU+ROAA and 14 for coherent one-shot. The QSP phases corresponding to these polynomials were also solved using the PyQSP optimization package with the scaling parameters $\alpha = 2.5$ and $\beta = 0.25$; the QSP circuits for each algorithm were again modeled using the Qiskit Aer simulator. Similarly to the time-independent case, the parameter $\alpha$ was chosen to be greater than the largest eigenvalue of $\mathcal{H}(t)$ (which is $\sqrt{5}$ in this example), while the parameter $\beta$ was chosen to be sufficiently large such that a high success probability was still achieved for the one-shot scheme.

The simulation results for the time-dependent case are displayed in Fig. \ref{fig:timeDependent} with $\langle \sigma_1^z \rangle$ and its absolute error plotted at each Trotter step in the upper and lower panel, respectively. While the outputs of both the QSP-LCU+ROAA and the coherent one-shot algorithms agree well with the exact results for the first few time steps, the absolute error increases roughly linearly in both cases as a function of the total simulation time $t$. This linear behavior in error vs. time is expected as a direct result of the Trotterization because, in Eq. \eqref{eq:approximateTimeEvolution}, $\epsilon$ accumulates linearly with $t$ as $\epsilon \sim t^2/L \sim t \Delta t$ for fixed $\Delta t = t/L$. To determine this intrinsic Trotterization error, the ideal Trotterized evolution operators were computed using the formula in Eq.~(\ref{eq:approximateTimeEvolution}) by numerically calculating $e^{-i\mathcal{H}\Delta t}$ for each individual Trotter step exactly. A linear trendline fitted to the error of this ideal Trotterization is also shown in Fig. \ref{fig:timeDependent}, alongside similar trendlines for the total errors of the QSP-LCU+ROAA and one-shot results.

A lower slope of the error accumulation trendline reflects a higher quality approximation to the $e^{-i\mathcal{H}(k \Delta t)\Delta t}$ operators in each time step. Because the error trendline for our coherent one-shot algorithm has a lower slope than the trendline for the QSP-LCU+ROAA method, we conclude that, for this particular simulation, our coherent one-shot algorithm introduces less error at each time step than QSP-LCU+ROAA. 
Relative to the ideal Trotterization error, the fitted error for one-shot at the end of the simulation period ($t = 12$) reached a value of 0.123, demonstrating roughly a factor of 1.5 improvement over the corresponding relative error 0.187 of the QSP-LCU+ROAA algorithm. Note further that by decreasing the step size $\Delta t$, the intrinsic Trotterization error can be arbitrarily minimized to achieve the desired simulation accuracy.


The total probabilities of success were also calculated after each Trotter step using the same techniques as in the time-independent case. Both the QSP-LCU+ROAA and coherent one-shot algorithms boasted near-unity success probabilities of 99.9\% and 98.4\% on average, respectively; both of these are expected to increase with the number of Hamiltonian queries. Such a high probability of success after each step enables the sequential execution of many Trotter evolution steps without the need for classical repetition, which is a clear advantage not available to existing incoherent QSP-based simulation techniques.

\subsection{Electronic Dynamics of Molecules}
\label{sec:electronic-dynamics}


Electronic dynamics plays an important role in determining the chemical reactivity and response properties of molecules interacting with external field \cite{nisoli2017attosecond}. In this section, we demonstrate the ability of our algorithms to tackle these kind of applications by simulating the local charge oscillation dynamics of a H$_2$ molecule under a minimal basis. We choose H$_2$ to benchmark the performance of our algorithms because it is the simplest molecule with correlated electronic structure. In principle, these algorithms can be extended to larger molecules beyond the capability of numerically exact classical algorithms for long-time electronic dynamics, given access to a fault-tolerant quantum computer.

We give numerical details of the electronic Hamiltonian of H$_2$ molecules in Sec. \ref{sec:ham-h2}, followed by descriptions of the initial state and observables used in the simulation. In Sec. \ref{sec:h2-results}, expectation values of proper observables are presented as a function of simulation time which reveals the local charge oscillation dynamics.

\subsubsection{Electronic Hamiltonian of Hydrogen Molecule}
\label{sec:ham-h2}
We obtain the electronic Hamiltonian $\mathcal{H}$ of the H$_2$ molecule with internuclear distance 0.5 \AA~ using a minimal STO-3G basis. To simulate the local charge oscillation dynamics, it is convenient to transform the molecular integrals from the canonical molecular orbital representation to localized orbitals. In this case, we use Lowdin's orthogonalized atomic orbitals \cite{lowdin1950non}.

The Jordan-Wigner mapping \cite{jordan1993paulische} is then applied to map the second-quantized electronic Hamiltonian under the localized basis to qubits, resulting in a 4-qubit Hamiltonian with a dimension of 16$\times$16. A unitary dilation (block-encoding) of this Hamiltonian is then formed using one additional ancilla qubit, in similar spirit to Eq. \eqref{eq:be-2spin}, giving a total of 5-qubit unitary with $32\times 32$ dimensions. The Pauli sum representation of the H$_2$ Hamiltonian after Jordan-Wigner mapping is documented in Appendix \ref{app:ham-h2}. Given this block-encoding, the rest of the simulation algorithm follows as in Sec. \ref{Sec:OneShotHamSim}.

\subsubsection{Initial State and Observables}

To simulate the local charge oscillation, we start from a Slater determinant that occupies the two spin-orbitals on the hydrogen atom H$_B$, and monitor the occupation number on atom H$_A$ as the simulation time $t$ grows (see inset of Fig. \ref{fig:h2-result}). Since the two hydrogen atoms are completely equivalent, the system behaves much like a double-well potential, and the localized charge initially on H$_B$ will migrate gradually to H$_A$ and then back to H$_B$ through tunneling. This process persists as no vibronic coupling is considered in our Hamiltonian. Note that in practice, the electronic energy may gradually dissipate to vibrations of the H$_2$ molecule. We shall leave the case of vibronic simulation to future work.

In more concrete terms, assume the 4 system qubits are in block-spin format where the first two are spin-up orbitals on hydrogen H$_A$ and H$_B$, and the last two are spin-down orbitals on H$_A$ and H$_B$. As we have chosen the initial state to have one electron in each of the spin-up and spin-down orbitals on atom H$_B$, the initial state $\ket{\psi}_{\mathrm{init}}$ is
\begin{align}
    \ket{\psi}_{\mathrm{init}} = \ket{0101},
\end{align}
where the $\ket{\cdot}$ notation on the right-hand-side represents a Slater determinant, and the qubit state $\ket{1}$ ($\ket{0}$) stands for one (no) electron occupying the corresponding spin-orbital.

To quantify the charge dynamics, we monitor the expectation value of the total occupation number $\langle \hat{n}_A \rangle$ on hydrogen H$_A$, given by
\begin{align}
    \langle\hat{n}_A\rangle = \bra{\psi}\left(a_{A, \uparrow}^\dagger a_{A, \uparrow} + a_{A, \downarrow}^\dagger a_{A, \downarrow}\right)\ket{\psi},
\end{align}
 where  $\ket{\psi} \in \{\left[e^{-i\mathcal{H}t}\right]_{\textrm{ROAA}}\ket{\psi_0},\left[e^{-i\mathcal{H}t}\right]_{\textrm{OS}}\ket{\psi_0}\}$ for increasing simulation time $t$. On quantum computers, the $\hat{n}_A$ operator is mapped to the following spin operators via the Jordan-Wigner mapping:
\begin{align}
    \hat{n}_A = \frac{I_{A, \uparrow}-Z_{A, \uparrow}}{2} + \frac{I_{A, \downarrow}-Z_{A, \downarrow}}{2},
\end{align}
where $I$ and $Z$ are the Pauli operators acting on the corresponding qubits.

\subsubsection{Results}
\label{sec:h2-results}

Fig. \ref{fig:h2-result} shows the time-dependence of $\langle \hat{n}_A \rangle$ after the initial state $\ket{\psi}_{\mathrm{init}}$ is time-evolved using both the QSP-LCU+ROAA and coherent one-shot algorithms. To benchmark these numerical results, the values of $\langle \hat{n}_A \rangle$ corresponding to exact time-evolution (computed via Eq. (\ref{eq:exact_evolution})) are included in the plot as well. As in the earlier Heisenberg model simulations, the degrees of the QSP polynomials used in the two schemes were chosen so that the number of queries made to the Hamiltonian block encoding by each algorithm would be nearly equivalent. Explicitly, polynomials with degrees $d_{\textrm{cos}} = 6$ and $d_{\textrm{sin}} = 5$ were used to approximate $\cos(\mathcal{H}t)$ and  $-i\sin(\mathcal{H}t)$, respectively, in the QSP-LCU+ROAA scheme; a polynomial with degree $d_{\textrm{EECE}} = 32$ was used in the coherent one-shot scheme. Hence, QSP-LCU+ROAA employs 33 queries to the block-encoding per simulation interval, while one-shot employs 32. The QSP phases corresponding to these polynomials were obtained (as in the Heisenberg model simulations) using the PyQSP optimization package with the parameters $\alpha = 1$ and $\beta = 0.5$.

As anticipated, the total occupation number on atom H$_A$ gradually increases from 0 to near 2, before charge is transferred back to atom H$_B$. This oscillation has a period of roughly 0.12 fs, which lies in the typical regime of attosecond electronic dynamics. For this particular simulation, our coherent one-shot algorithm produced results with lower error than those of QSP-LCU+ROAA: over the simulation intervals tested, one-shot resulted in an average population error of 0.064, compared to 0.094 for QSP-LCU+ROAA.

Note that for times $t < 0.06\textrm{ fs}$, both algorithms were found to succeed with reasonably high probability (namely, with success probabilities $1-\delta > 0.85$). However, for some simulation intervals after $t=0.06\textrm{ fs}$, the success probabilities of both algorithms dropped below this 0.85 threshold. These data points corresponding to lower success probability are distinguished with dotted markers in Fig. \ref{fig:h2-result}. Decreasing success probability is expected at long simulation times, since the QSP polynomial approximations to the functions $\cos(x\tau)$ and $\sin(x\tau)$ (in the case of QSP-LCU+ROAA) and $e^{-ix\tau}$ (in the case of coherent one-shot) deteriorate for large $\tau$ if the QSP polynomial degrees are held constant. If a higher success probability is desired at long simulation times, the success probability can be arbitrarily improved to satisfaction by increasing the degrees of the QSP polynomials used.

\begin{figure*}[htbp]
  \includegraphics[width=14cm,height=7cm]{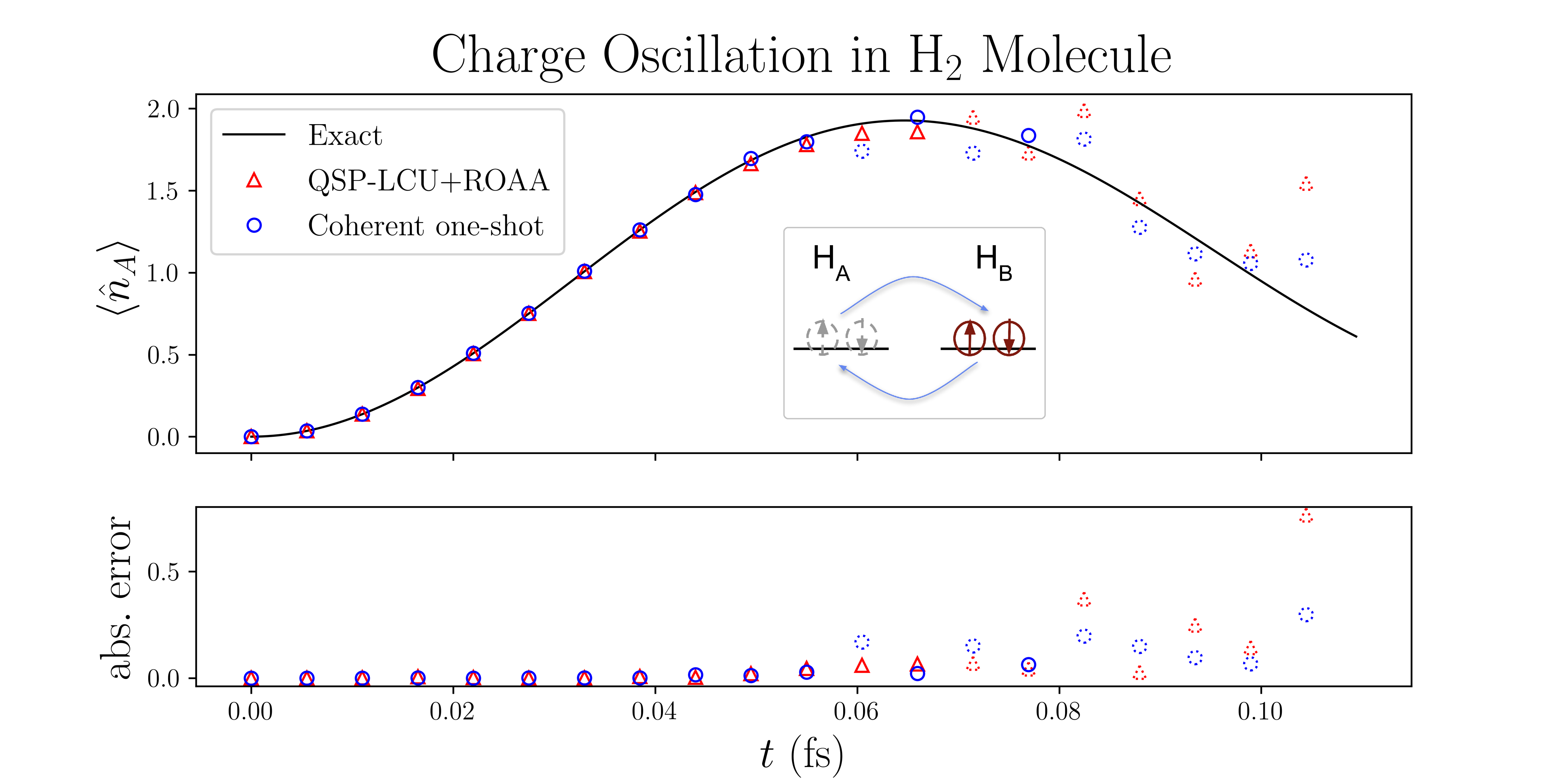}
  \caption{Local charge oscillation dynamics in an H$_2$ molecule with an internuclear distance of 0.5~\AA\ simulated using both the conventional LCU algorithm with robust oblivious amplitude amplification (QSP-LCU+ROAA) and the coherent one-shot algorithm. The time scale on the horizontal axis is given in femtoseconds. The QSP-LCU+ROAA scheme made 33 queries to the Hamiltonian block-encoding for each simulation interval, while the one-shot scheme made 32. The inset shows the arrangement of the localized orbitals and initial states. \textbf{(Top)} Expectation value of the total electron occupation number $\langle \hat{n}_A\rangle$ on atom H$_{A}$. \textbf{(Bottom)} Absolute error between algorithm output and exact result at each simulation time interval. Note for both plots that data points associated with success probability $1-\delta > 0.85$ are presented with solid markers, while points with $1-\delta \leq 0.85$ are instead presented with dotted markers.}
    \label{fig:h2-result}
\end{figure*}

\section{Discussion and Conclusion}\label{Sec:Discussion}
In this work, we have presented the concept of an \emph{efficient fully-coherent} Hamiltonian simulation algorithm, which performs Hamiltonian simulation and retains the time evolved wave function with arbitrarily high success probability, while also achieving a query complexity polynomial in the effective time $\| \mathcal{H} \| |t|$, the logarithm of the inverse error $\ln(1/\epsilon)$, and the logarithm of the inverse failure probability $\ln(1/\delta)$. We focused on developing said algorithms through QSP techniques, provided a block encoding of the Hamiltonian. We showed in Sec.~\ref{Sec:ConventionalHamSim} that conventional QSP-based simulation is not fully-coherent. However, it can be made fully-coherent by augmentation with amplitude amplification as in Sec.~\ref{sec:conv_ampamp} at the expense of appending a $\ln(1/\delta)$ multiplicative factor to the query complexity, or by augmentation with robust oblivious amplitude amplification as in Sec.~\ref{sec:RO_ampamp}, which circumvents this factor.

We further introduced the concept of a pre-transformation in Sec.~\ref{Sec:OneShotHamSim} and used it to develop our coherent one-shot Hamiltonian simulation algorithm, which achieves efficient fully-coherent simulation and attains a query complexity additive in $\ln(1/\delta)$. In Sec.~\ref{Sec:Scaling}, we compared the query complexities of the coherent algorithms, illustrating a trade-off between the coherent one-shot algorithm and QSP-based simulation augmented with robust oblivious amplitude amplification. Finally, the applications of our algorithms in Sec.~\ref{Sec:Applications} to the Heisenberg model and an H$_2$ molecule under a minimal basis evidence these fully coherent algorithms as powerful tools for the simulation of spin chains and correlated electronic dynamics.

As established in Sec. \ref{Sec:Applications}, using a fixed number of queries to the Hamiltonian, the coherent one-shot algorithm is shown to efficiently simulate time evolution for both small and large values of times; thus, it is a highly effective tool for both time-independent Hamiltonian simulations with a single long-time algorithm call \emph{and} Trotterized time-dependent simulations with many short-time algorithm calls. 

At the heart of coherent one-shot Hamiltonian simulation is the use of a pre-transformation to rescale the eigenvalues of the Hamiltonian. This is crucial as the complex exponential cannot be implemented as a QSP polynomial for arbitrary inputs, but can be done over a restricted range as we analytically verified by the polynomial constructed in Sec.~\ref{Sec:OneShotHamSim}. Looking towards extensions of this idea, we expect pre-transformations to generate a broader class of achievable transformations, and also note that out framework of compressing the spectrum and then applying an even/odd extension of a function could certainly be generalized to functions other than the complex exponential. In addition, while we employed a linear pre-transformation here, it would be interesting to study the applicability of nonlinear pre-transformations. One way to construct such a pre-transformation would be via QSP itself, which suggests the power of concatenated QSP algorithms -- algorithms that apply a sequence of QSP transformations.

Regarding improvements of the coherent one-shot algorithm, it would be highly desirable to attain a better analytic expression for a QSP polynomial that approximates $e^{-ix \tau}$ over a range of positive $x$. While the construction provided in Sec.~\ref{Sec:OneShotHamSim} is sufficient, it is by no means optimal. Additionally, the query complexity $N_{\mathcal{H}}^{\mathrm{OS}}$ could likely be sharpened to better elucidate the tradeoff between the coherent one-shot algorithm and the QSP-LCU+ROAA algorithm.

Moreover, one interesting application of the fully-coherent simulation algorithms discussed here is to time-dependent Hamiltonians, as discussed in Sec.~\ref{sec:time-dependent}. However, the lower-bound on errors in this case is limited by the Trotterization error, which suggests the utility of studying the coherent simulation of time-dependent Hamiltonians beyond Trotterization. For instance, it may be fruitful to combine a fully-coherent simulation algorithm with the time-dependent methods presented in Refs.~\cite{low2018hamiltonian,hen2021quantum}. In addition, a fully-coherent algorithm could also be integrated with higher-order product formulas for the time evolution operator~\cite{childs2021theory}, or even invoked as a subroutine in specialized algorithms for local Hamiltonian simulation~\cite{Haah_2021}.

We also note that efficient block-encoding of correlated electronic Hamiltonians of large systems is paramount to the success of QSP-based algorithms. We look forward to novel constructions of more efficient block-encodings for correlated electronic structure Hamiltonians beyond the LCU framework \cite{Childs_2017,wan2021block} and tensor-hyper contraction \cite{lee2021even}.

We hope that this work will shed light on the importance of full-coherence in quantum algorithms. The fully-coherent Hamiltonian simulation algorithms presented here should be applicable as subroutines in larger quantum algorithms, where reliable passing of \textit{quantum} data from one subroutine to another is required without measurement or post-selection. In this sense, the coherence of our Hamiltonian simulation algorithm, and hopefully many more to come, can facilitate integration with other quantum primitives and give rise to further novel and powerful quantum algorithms. Moreover, in quantum dynamical processes of many physical systems, we expect the system itself passes quantum information coherently in some form from one part to the other. Because of this, we believe such capability to pass quantum data coherently from one subroutine to another in quantum simulation algorithms is crucial to model the quantum dynamics in a variety of chemical, biology, and condensed matter systems with ever increasing complexity. It is also our wish that this paper emphasizes the power and flexibility of QSP techniques for quantum simulation and beyond. As QSP has already revolutionized the design of quantum algorithms for problems of physical relevance, it is likely to also provide novel pathways to many applications of chemical relevance.

\begin{acknowledgments}
The authors thank Minh Tran and Brenda Rubenstein for helpful discussions. YL and ZEC were supported in part by NTT Research. The work of YL and ILC on analysis and numerical simulation was supported by the U.S. Department of Energy, Office of Science, National Quantum Information Science Research Centers, Co-Design Center for Quantum Advantage, under contract number DE-SC0012704.

\textit{Conflicts of Interest:} The authors have no conflicts to disclose.

\textit{Data Availability:} The data in this paper were generated using the PyQSP package~\cite{pyqsp}. Data supporting then findings of this study are available from the corresponding author upon reasonable request.

\end{acknowledgments}

\appendix

\section{A Foray into QSP, QET, and QSVT}\label{Sec:qsp}
In this appendix, we review the technique of quantum signal processing, presenting the theorems spelled out in Refs. \cite{Low_2016, Low_2017, Low_2019}. From a high level, QSP provides a systematic method to apply a nearly arbitrary polynomial transformation to a quantum subsystem. For a target polynomial of degree $d$, this is achieved using $\mathcal{O}(d)$ elementary quantum gates. In particular, one applies a specific sequence of $(d+1)$ SU(2) rotations to the subsystem of interest, where each rotation is parameterized by an angle $\phi_k \in \mathbb{R}$. The QSP algorithm is parametric in that the polynomial transformation achieved is completely characterized by the choice of phase angles $\{\phi_k\}$.

Subsequently, we will discuss how QSP may be naturally generalized to apply a polynomial transformation to the eigenvalues of a matrix encoded in a block of a unitary matrix, in which case this procedure is known as a quantum eigenvalue transform (QET) and ultimately produces a polynomial transform of a matrix \cite{Low_2017, Low_2019, martyn2021grand}. Similarly, this procedure may be further extended to the singular values of a possibly non-square matrix, in which case this is known as a quantum singular value transformation (QSVT), and the polynomial is applied to the singular values of the matrix \cite{Gily_n_2019}. This ability to polynomially transform the singular values of an arbitrary linear operator is incredibly powerful and has been shown to be a building block from which of nearly all known quantum algorithms may be derived \cite{Gily_n_2019, martyn2021grand}.

\subsection{Overview of QSP}\label{Sec:qsp_Overview}
Quantum signal processing (QSP) \cite{Low_2016, Low_2017, Low_2019} works by interleaving a signal rotation operator $W$, and a signal processing rotation operator $S$. These operators are taken to be SU(2) rotations about different axes, where the signal rotation operator is a rotation through a fixed angle $\theta$ and the signal processing rotation operator is a rotation through a variable angle parametrized by a real number $\phi$. Typically, $W$ is taken to be an $x$-rotation and $S$ a $z$-rotation, although other equivalent conventions exist. 
 
Let us follow standard conventions and define the signal rotation operator as
\begin{equation}
    W(x) = \begin{bmatrix}
        x & i\sqrt{1-x^2} \\
        i\sqrt{1-x^2} & x
    \end{bmatrix},
\end{equation}
which is an $x$-rotation through an angle $\theta = -2\cos^{-1}x$ ($x \in [-1,1]$). Similarly, define the signal processing rotation operator as
\begin{equation}
    S(\phi) = e^{i\phi Z},
\end{equation}
which is a $z$-rotation through an angle $-2\phi$. Then, with a set of \emph{QSP phases} $\vec{\phi} = (\phi_0, \phi_1, ... , \phi_d) \in \mathbb{R}^{d+1}$, one can construct the QSP sequence, $U^{\vec{\phi}}$, which is defined as the following interspersed sequence of $W$ and $S$:
\begin{equation}
    \begin{gathered}
        U^{\vec{\phi}} = S(\phi_0) \prod_{k=1}^d  W(x)  S(\phi_k) = e^{i \phi_0 Z} \prod_{k=1}^d  W(x)  e^{i\phi_k Z}.
    \end{gathered}
\end{equation}

Curiously, the matrix elements of this operation are polynomials of $x$, parameterized by the QSP phases $\vec{\phi}$. In particular, Ref. \cite{Low_2016} proved that 
\begin{equation}\label{eq:qsp}
    \begin{gathered}
        U^{\vec{\phi}} = e^{i \phi_0 Z} \prod_{k=1}^d  W(a)  e^{i\phi_k Z} = \\
        \begin{bmatrix}
            P(x) & iQ(x)\sqrt{1-x^2} \\
            iQ^*(x)\sqrt{1-x^2} & P^*(x)
        \end{bmatrix}
    \end{gathered}
\end{equation}
where $P(x)$ and $Q(x)$ are polynomials that obey:
\begin{equation}\label{eq:qsp_conditions}
    \begin{split}
        & \text{1. } {\rm deg}(P) \leq d, \ {\rm deg}(Q) \leq d-1 \\
        & \text{2. } P(x)\ \text{has parity } d~ {\rm mod}~ 2 \\
        & \text{3. } |P(x)|^2 + (1-x^2) |Q(x)|^2 = 1, \ \forall \  x \in [-1,1]. 
    \end{split}
\end{equation}

This result is quite fascinating. It tells us that we can prepare polynomials of $x$ by projecting into a block of $U^{\vec{\phi}}$. For instance, $\langle 0| U^{\vec{\phi}} | 0\rangle = P(x)$, so $P(x)$ dictates the probability a particle in state $|0\rangle$ remains in its initial state upon application of $U^{\vec{\phi}}$. 

The block encoding has thus far focused on the $|0\rangle \langle 0 |$ matrix elements: $x = \langle 0| W(x)|0\rangle$, and $P(x) = \langle 0| U^{\vec{\phi}}|0\rangle$. While this choice seems natural, it is not necessary and is actually inhibitory. In particular, matrix elements of $U^{\vec{\phi}}$ in bases other than the computational basis may be projected out, in which case these matrix elements are linear combinations of $P(x)$ and $Q(x)\sqrt{1-x^2}$ (and their complex conjugates). For instance, we may project into the $\{|+\rangle, |-\rangle \}$ basis, wherein we find the useful result 
\begin{equation}
\begin{split}
    &\langle+|U^{\vec{\phi}}|+\rangle = \text{Re}(P(x)) + i\text{Re}(Q(x))\sqrt{1-x^2}.
\end{split}
\end{equation}
This result is important because projecting into the $|+\rangle\langle+|$ matrix element allows us to construct a larger class of polynomials than those achieved by projection into the computational basis. Whereas $P(x)$ is required to obey $|P(x=\pm 1)| = 1$ as per the third condition in Eq.~(\ref{eq:qsp_conditions}), the above polynomial needs not obey this restriction. In particular, it can be shown that the projection into the $|+\rangle \langle +|$ matrix element can accurately approximate any real polynomial with parity $d~ {\rm mod}~ 2$ such that $\deg({\rm Poly}) \le d$, and $| {\rm Poly}(a) | \le 1 ~ \forall ~x \in [-1, 1]$. This can be achieved by selecting an appropriate $P$ whose real part approximates the desired function and a $Q$ with small real component. As we will see later, this additional polynomial freedom will be crucial to conventional QSP-based Hamiltonian simulation. 

Implicit in this construction is that we can access the correct block of the QSP sequence, i.e. we can choose to access some matrix element such as $\langle 0|U^{\vec{\phi}}|0\rangle$. Accessing the correct block is fundamental to applications of QSP, as QSP-based algorithms will only be successful if a desired block is accessed. In such an application, we must project the final state into being in the desired block by performing a projective measurement. For instance, if our goal is to access the $|0\rangle \langle 0 |$ block of the QSP sequence, we may measure the auxiliary qubit (used to achieve the block encoding) in the computational basis, which will isolate the desired polynomial if $|0\rangle$ is measured.

In general however, this measurement will output the desired result with non-unit probability $p=|\langle 0 |P(x) | 0\rangle|^2$, which is dictated by the choice of polynomial. Nonetheless, the probability of accessing the correct block can be increased by using classical repetition or amplitude amplification. Although this procedure is implicitly performed in QSP-based algorithms, it has the downside that it requires multiple instance of QSP. As a result, if one desires a probability of success $\geq 1-\delta$, then there is a $\delta$-dependent multiplicative factor in the complexity of the QSP-based algorithm, and this factor is often ignored in statements of query complexity. For example, in the case of amplitude amplification, this factor is $\Theta\left(\ln(\frac{1}{\delta})\right)$. In Sec.~\ref{Sec:OneShotHamSim}, we eliminated this multiplicative factor for QSP-based Hamiltonian simulation by presenting a fully-coherent simulation algorithm that succeeds with arbitrarily high probability and requires only a single QSP instantiation, while retaining a near-optimal scaling of the query complexity.

\subsection{Generalization to QET} \label{sec:qsp_Gen}
The construction of QSP may be generalized to apply polynomial transformations to the eigenvalues of a matrix, a procedure that we call a \emph{quantum eigenvalue transformation} (QET). To see this, note that QSP has the following interpretation: QSP begins with a matrix $W(x)$ that is effectively a block encoding of $x$,
\begin{equation}
    W(x) = \begin{bmatrix}
        x & \cdot \\
        \cdot & \cdot
    \end{bmatrix}.
\end{equation}
By interspersing this matrix with rotation operators parameterized by $\vec{\phi}$, we constructed an operator that block encodes a polynomial transformation of $x$, $P(x)$:
\begin{equation}
    U^{\vec{\phi}} = \begin{bmatrix}
        P(x) & \cdot \\
        \cdot & \cdot
    \end{bmatrix}.
\end{equation}
So in terms of block encodings, this procedure performs the operation $x \mapsto P(x)$. 

Paralleling this scenario, one may instead begin with a unitary block encoding $U$ of a matrix $A = \sum_\lambda \lambda |\lambda \rangle \langle \lambda|$:
\begin{equation}
    U=\begin{bmatrix}
        A & \cdot \\
        \cdot & \cdot
    \end{bmatrix}
\end{equation}
(this is only possible if $\|A\| \leq 1$, which we assume here; if this condition is not met, one can rescale $A$ by a constant to meet this condition and block encode this rescaled matrix.) More generally, we assume that $A$ can be accessed with a projector $\Pi$ as $A=\Pi U \Pi$. 

In QET, one intersperses $U$ with a rotation operator that acts as a $z$-rotation within each eigenspace of $A$. This operator, known as the projector-controlled phase shift, may be expressed as $\Pi_\phi := e^{i\phi(2\Pi-I)}$ for a rotation angle $\phi$, and can be straightforwardly constructed as per Fig.~\ref{fig:ProjControlledPhaseShift}~\cite{Gily_n_2019}.
\begin{figure}[htbp]
    \begin{center}
    \includegraphics[width=4.9cm]{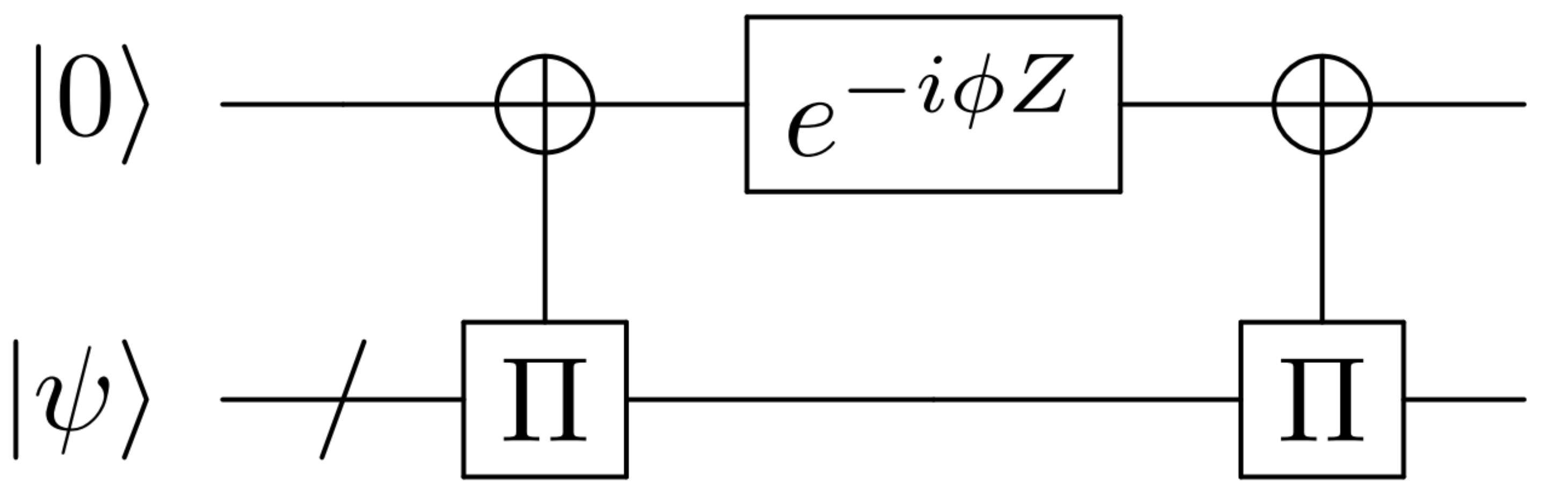}
    \end{center}
\caption{The circuit used to realize the projector-controlled phase shift $\Pi_\phi$ (up to a global phase), as a product of projector-controlled-{\sc not} gates and a $z$-rotation.}
\label{fig:ProjControlledPhaseShift}
\end{figure}

Paralleling Eq.~(\ref{eq:qsp}), one can then define the following QET sequence, which effectively performs QSP within each eigenspace and a polynomial transformation of $A$: 
\begin{equation}
\begin{split}
    U^{\vec{\phi}} = \Pi_{\phi_{0}}& \prod_{k=1}^{d} U \Pi_{\phi_{k}} \ \  \Rightarrow \\
    &\Pi \  U^{\vec{\phi}} \Pi = \sum_\lambda P(\lambda) |\lambda \rangle \langle \lambda| = P(A).
\end{split}
\end{equation}

In terms of block encodings, QET maps a matrix $A \mapsto P(A)$, thus developing a polynomial transformation of the matrix. As in Sec. \ref{Sec:qsp_Overview}, it is conventional that $\Pi = |0\rangle \langle 0|$, in which case we may obtain $P(A)$ when we project into the $|0\rangle \langle 0|$ matrix element of $U^{\vec{\phi}}$, which can be done with some finite probability by measuring the ancilla qubit used to achieve the encoding. Alternatively. we may also project into the $|+\rangle \langle +|$ component to achieve a larger class of polynomial transformations of $A$.

\subsection{Generalization to QSVT}\label{sec:QSVT_overview}
Generalizing even further, this construction may be adapted to apply a polynomial transformation to the singular values of an arbitrary linear operations (i.e. a possibly non-square matrix). Recall that we may write the singular value decomposition of a matrix $A$ as $A = W\Sigma V^\dag$, where $W$ and $V$ are unitaries and $\Sigma$ contains the singular values (non-negative) along its diagonal. Denoting the singular values by $\{\sigma_i\}$, this may be re-expressed as
\begin{equation}
    A = \sum_i \sigma_i |w_i\rangle \langle v_i|, 
\end{equation}
where $|w_i \rangle$ and $| v_i \rangle$ are the columns of $W$ and $V$ respectively. 

Suppose we have a unitary $U$ that block encodes $A$ as $\tilde{\Pi} U \Pi = A$, where $\tilde{\Pi}$ and $\Pi$ are projectors. We may follow a procedure analogous to that of QET by interspersing this block encoding and its Hermitian conjugate (a block encoding of $A^\dag)$ with an operator that acts a rotation within the left and right singular vector spaces (where the left singular vector space is spanned by $\{|w_k\rangle\}$, and the right singular vector space is spanned by $\{|v_k\rangle\}$). As before, these are projector-controlled phase shifts, denoted by $\Pi_\phi$ and $\tilde{\Pi}_\phi$, respectively, and can be constructed analogous to Fig.~\ref{fig:ProjControlledPhaseShift}.

We then define the following QSVT sequence, which performs QSP within each singular vector space, and thus applies a polynomial transformation to the singular values. Its form differs from that of QSP and QET, as it must deal with the different left and right singular vector spaces: 
\begin{equation}
\begin{split}
    &U^{\vec{\phi}} = \begin{cases}
        \tilde{\Pi}_{\phi_{1}}  U \left[ \prod_{k=1}^{(d-1)/2} \Pi_{\phi_{2k}} U^\dagger \tilde{\Pi}_{\phi_{2k+1}}  U  \right]  & d \ \text{odd} \\
        \left[ \prod_{k=1}^{d/2} \Pi_{\phi_{2k-1}} U^\dagger \tilde{\Pi}_{\phi_{2k}}  U  \right] & d \ \text{even},
    \end{cases} \\
    &\Rightarrow \quad P^{(SV)}(A) = \begin{cases}
        \tilde{\Pi} U^{\vec{\phi}} \Pi & d \ \text{odd}  \\ 
        \Pi U^{\vec{\phi}} \Pi & d \ \text{even},
    \end{cases}
\end{split}
\end{equation}
where
\begin{equation}
    P^{(SV)}(A) := \begin{cases}
        \sum_i P(\sigma_i) |w_i\rangle \langle v_i| & P \text{ odd,} \\
        \sum_i P(\sigma_i) |v_i\rangle \langle v_i| & P \text{ even.} 
    \end{cases}
\end{equation}
This protocol, known as the \emph{quantum singular value transformation} (QSVT), is a powerful subroutine that has been demonstrated to underlie nearly all quantum algorithms \cite{Gily_n_2019, martyn2021grand}. Note that $ P^{(SV)}(A)$ differs for odd and even polynomials through the basis in which it is expressed. This arises because each application of $A$ and $A^\dag$ switches between the left and right singular vector spaces and vice versa, respectively, and the number of applications directly corresponds to the parity of the polynomial. This crucial distinction comes into play in Sec.~\ref{sec:full-coherent-aa}.

Ultimately, QSP is at the crux of QET and QSVT. So if we can develop a useful polynomial through QSP, say one that accurately approximates a function of interest, then we may apply it to the eigenvalues of a matrix via QET, or to the singular values via QSVT. A particularly useful such function is the complex exponential for Hamiltonian simulation, whose construction we discuss in Secs.~\ref{Sec:ConventionalHamSim} and~\ref{Sec:OneShotHamSim}.

\section{Polynomial Approximation of the Sign Function} \label{Sec:SignFunction}

Here we outline the construction of a polynomial approximation to the sign function. We closely follow the development of Refs. \cite{low2017quantum, Sachdeva_Faster}, wherein the full details and proofs of some statements employed here can be found. 

To start the construction, we note that Ref. \cite{low2017quantum} proves that $\text{erf}(kx) = \frac{2}{\sqrt{\pi}} \int_0^{kx} e^{-y^2} dy = \frac{2k}{\sqrt{\pi}} \int_0^x e^{-(ku)^2} du$ is an $\epsilon$-approximation to the sign function for $|x|\geq \Delta/2$ if $k= \frac{\sqrt{2}}{\Delta} \sqrt{W(\frac{2}{\pi \epsilon^2})} < \frac{\sqrt{2}}{\Delta} \sqrt{\ln(\frac{2}{\pi \epsilon^2})}$ (for $\frac{2}{\pi \epsilon^2} > e$), where $W(x)$ is the Lambert $W$ function. A polynomial approximation to the sign function may then be built by first constructing a polynomial approximation to $e^{-k^2 u^2}$, inserting it into this expression, and integrating as necessary. 

The exponential function $e^{-a(x+1)}$ (which is bounded in magnitude by $1$ for $x \in [-1,1]$) may be approximated for $x \in [-1,1]$ by truncating the following modified Jacobi-Anger expansion:
\begin{equation}
    e^{-a(x+1)} = e^{-a}\Big(I_0(a) + 2\sum_{j=1}^\infty I_j(a) T_j(-x) \Big),
\end{equation}
where $I_j(x)$ is the modified Bessel function (of the first kind) of order $j$, and $T_j(x)$ is the Chebyshev polynomial of order $j$. Ref.~\cite{low2017quantum} proves that one can attain a polynomial $\epsilon$-approximation to $e^{-a(x+1)}$ by truncating this series at $j \geq  \sqrt{2\cdot  \text{max} (ae^2, \ln(2/\epsilon)) \cdot \ln(4/\epsilon)}$, which is thus a polynomial of degree at least $d_{\text{exp},a,\epsilon}:= \sqrt{2 \cdot \text{max} (ae^2, \ln(2/\epsilon)) \cdot \ln(4/\epsilon)}$. It will be convenient to express this as $\epsilon_{\text{exp},a,d} \leq \epsilon $ for $d \geq d_{\text{exp},a,\epsilon}$ -- i.e., the error suffered is less than $\epsilon$ when the degree is greater than $ d_{\text{exp},a,\epsilon}$. 

Ref.~\cite{low2017quantum} further demonstrates how this construction may be used to estimate $e^{-(ku)^2}$ with a simple change of variables. Inserting the resulting polynomial into the expression $\text{erf}(kx) = \frac{2k}{\sqrt{\pi}} \int_0^x e^{-(ku)^2} du$, we obtain a polynomial approximation to the $\text{erf}(kx)$ that suffers error $\epsilon_{\text{erf},k,d} \leq \frac{4k}{\sqrt{\pi}d} \epsilon_{\text{exp},k^2/2,(d-1)/2}$. 

\begin{figure*}[htbp]
  \includegraphics[width=0.95\linewidth]{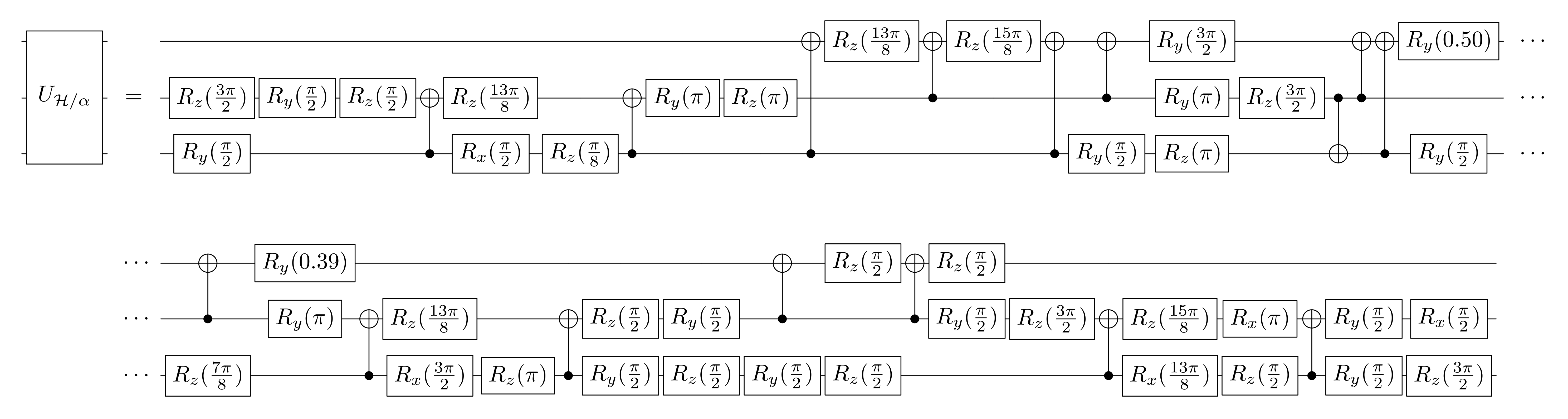}
  \caption{Quantum circuit implementing the Hamiltonian block encoding $U_{\mathcal{H}/\alpha}$ of the two-qubit Heisenberg model simulated in Sec.~\ref{sec:numsim-t-ind}. From top to bottom, the qubits in the circuit are ordered from most to least significant.}
  \label{fig:unitaryBlockEncodingCirc}
\end{figure*}

Ultimately, we desire an $\epsilon/2$-approximation to $\text{erf}(kx)$ with $k=\frac{\sqrt{2}}{\Delta} \sqrt{W\left(\frac{2}{\pi (\epsilon/2)^2} \right)} = \frac{\sqrt{2}}{\Delta} \sqrt{W\left(\frac{8}{\pi \epsilon^2} \right)}$, such that this function provides an $\epsilon$-approximation to the sign function by the triangle inequality. Thus, it will suffice to choose 
\begin{equation}
    \begin{split}
        \frac{4k}{\sqrt{\pi}d} &\epsilon_{\text{exp},k^2/2,(d-1)/2} \leq \epsilon/2 \Rightarrow \\
        &\qquad \epsilon_{\text{exp},k^2/2,(d-1)/2} \leq \frac{\sqrt{\pi}}{8}\frac{\epsilon d}{k},
    \end{split}    
\end{equation}
or equivalently, 
\begin{equation}
    \begin{aligned}
        &(d-1)/2 \geq d_{\text{exp},k^2/2,\frac{\sqrt{\pi}}{8} \frac{\epsilon d}{k}}= \\
        & \sqrt{2 \cdot \text{max}\Bigg(\frac{e^2k^2}{2}, \ \ln(\frac{16 k}{\sqrt{\pi}\epsilon d}) \Bigg) \cdot \ln(\frac{32 k}{\sqrt{\pi}\epsilon d})}. \qquad \qquad
    \end{aligned}
    \label{eq:d-1bound}
\end{equation}

If the first term is larger, then this bound requires
\begin{equation}
    \begin{split}
        &(d-1)/2 \geq  ek \sqrt{\ln(\frac{32 k}{\sqrt{\pi}\epsilon}) + \ln(\frac{1}{d})}.
    \end{split}
\end{equation}
Using the variable substitution $v=(d-1)/2$, squaring this equation and using $\frac{1}{1+2v} < \frac{1}{2v}$, we find that it will suffice to choose a $v^2 \geq \frac{1}{2} e^2k^2 W(\frac{512}{\pi e^2} \frac{1}{\epsilon^2})$, where $W(x)$ is the Lambert $W$ function. Equivalently,
\begin{equation}
    \begin{split}
        d \geq \frac{2 e}{\Delta} \sqrt{W\left(\frac{8}{\pi \epsilon^2} \right) W\left(\frac{512}{e^2 \pi} \frac{1}{\epsilon^2} \right)} + 1,
    \end{split}
\end{equation}
where we have inserted the explicit expression for $k$.

On the other hand, if the second term in Eq.~(\ref{eq:d-1bound}) is larger, then the bound requires
\begin{equation}
    \begin{split}
        &(d-1)/2 \geq \sqrt{2 \cdot \ln(\frac{16k}{\sqrt{\pi}\epsilon d})\left( \ln 2 + \ln(\frac{16k}{\sqrt{\pi}\epsilon d}) \right)}.
    \end{split} 
\end{equation}
Again setting $v=(d-1)/2$, and using the bound $\sqrt{x^2 +ax} \leq x + a/2$, we find that this can be satisfied if 
\begin{equation}
        d \geq 2\sqrt{2} W\left( \frac{8\sqrt{2}}{\sqrt{\pi} \Delta \epsilon} \sqrt{W\left(\frac{8}{\pi \epsilon^2} \right)} \right) + 1.
\end{equation}

In general then, it will suffice to truncate at degree 
\begin{widetext}
    \begin{equation}\label{eq:gamma}
        d = 2 \cdot \Bigg\lceil \text{max}\left(\frac{e}{\Delta} \sqrt{W\left(\frac{8}{\pi \epsilon^2} \right) W\left(\frac{512}{e^2 \pi} \frac{1}{\epsilon^2} \right)}, \  \sqrt{2} W\left( \frac{8\sqrt{2}}{\sqrt{\pi} \Delta \epsilon} \sqrt{W\left(\frac{8}{\pi \epsilon^2} \right)} \right) \right) \Bigg\rceil + 1 = : \gamma(\epsilon, \Delta),
    \end{equation}
\end{widetext}
which is necessarily an odd integer.

So in summary, an $\epsilon$ approximation to the sign function, valid for $|x|\geq \Delta/2$, can be constructed as polynomial of odd degree $\gamma(\epsilon, \Delta)$. Noting that $W(x) = \ln x - \ln (\ln x) + o(1) = \Theta\left(\ln(x)\right)$ for large $x$, we see that $\gamma(\epsilon,\Delta)$ scales as $\gamma(\epsilon,\Delta) = \Theta \left( \frac{1}{\Delta} \ln(\frac{1}{\epsilon})\right)$, corroborating the claims in the literature~\cite{low2017quantum}.

\section{Quantum circuit for unitary block-encoding}\label{sec:unitaryBlockEncodingCirc}

For completeness, the gate-level Hamiltonian block-encoding circuit for the two-spin Heisenberg model simulated in Sec.~\ref{sec:numsim-t-ind} is provided in Fig. \ref{fig:unitaryBlockEncodingCirc}. This circuit implements the unitary $U_{\mathcal{H}/\alpha}$ defined in Eq.~(\ref{eq:be-2spin}) with $\alpha = 1.5$. Recall that this specific block-encoded Hamiltonian $\mathcal{H}$ is defined according to Eqs.~(\ref{eq:heisenberg-Hfull}-\ref{eq:heisenberg-H1}) with the parameters $h_1(t) = h_2(t) = 0.5$ for all times $t$ and $g_x = 1$, $g_y = g_z = 0$. This quantum circuit for $U_{\mathcal{H}/\alpha}$ was generated using the Quantum Shannon Decomposition method via the UniversalQCompiler software package in Mathematica \cite{universalqcompiler}.

\section{Pauli Sum Representation of Electronic Hamiltonian of H$_2$}
\label{app:ham-h2}
The Pauli sum representation of the H$_2$ electronic Hamiltonian are provided below in Table~\ref{tab:h2_paulisum}.

\begin{table}[ht!]
    \centering
    \begin{tabular}{c|c}
        \hline\hline
        Coefficients & Pauli Operators \\
        \hline
           - 0.678523 & IIII \\
           - 0.077605 & ZIII \\
           - 0.077605 & IZII \\
           + 0.134592 & ZZII \\
           - 0.077605 & IIZI \\
           + 0.222157 & ZIZI \\
           + 0.137722 & IZZI \\
           - 0.077605 & IIIZ \\
           + 0.137722 & ZIIZ \\
           + 0.222157 & IZIZ \\
           + 0.134592 & IIZZ \\
           - 0.291540 & XXII \\
           - 0.291540 & YYII \\
           + 0.001571 & XXZI \\
           + 0.001571 & YYZI \\
           + 0.001571 & XXIZ \\
           + 0.001571 & YYIZ \\
           - 0.291540 & IIXX \\
           + 0.001571 & ZIXX \\
           + 0.001571 & IZXX \\
           - 0.291540 & IIYY \\
           + 0.001571 & ZIYY \\
           + 0.001571 & IZYY \\
           + 0.003129 & XXXX \\
           + 0.003129 & YYXX \\
           + 0.003129 & XXYY \\
           + 0.003129 & YYYY \\
        \hline \hline
    \end{tabular}
    \caption{The Pauli sum representation of the H$_2$ electronic Hamiltonian at bond length of 0.5 \AA~ under STO-3G basis. 
    }
    \label{tab:h2_paulisum}
\end{table}

\bibliography{References}
\bibliographystyle{apsrev4-1}

\end{document}